\newcounter{module}
\newenvironment{module}[1][htb]
  {
   \let\c@algocf\c@module
   \begin{algorithm}[#1]%
  }{\end{algorithm}}
\newcommand{\D}{\mathrm{\textbf{D}}}
\newcommand{\tet}{\vec{\theta}}
\newcommand{\A}{\mathrm{\textbf{A}}}
\newcommand{\B}{\mathrm{\textbf{B}}}
\newcommand{\sgn}{\mathrm{sgn}}
\newcommand{\W}{\mathrm{\textbf{W}}}
\newcommand{\Q}{\mathrm{\textbf{Q}}}
\newcommand{\diag}{\mathrm{diag}}
\newcommand{\Y}{\mathrm{\textbf{Y}}}
\title{Protecting the Grid against IoT Botnets of High-Wattage Devices}
\author{Saleh Soltan, Prateek Mittal, H. Vincent Poor}
\affiliation{%
  \institution{Department of Electrical Engineering, Princeton University}}
\email{{ssoltan,pmittal,poor}@princeton.edu}
\renewcommand\footnotetextcopyrightpermission[1]{} 
\begin{document}
\setlength{\textfloatsep}{2 pt}

\begin{abstract}
We provide methods to prevent line failures in the power grid caused by a newly revealed \underline{MA}nipulation of \underline{D}emand (MAD) attacks via an IoT botnet of high-wattage devices. In particular, we develop two algorithms named \underline{S}ecuring \underline{A}dditional margin \underline{F}or generators in \underline{E}conomic dispatch (SAFE) Algorithm and \underline{I}teratively \underline{M}ini\underline{M}ize and bo\underline{UN}d \underline{E}conomic dispatch (IMMUNE) Algorithm for finding robust operating points for generators during the economic dispatch such that no lines are overloaded after automatic primary control response to any MAD attacks. In situations that the operating cost of the grid in a robust state is costly (or no robust operating points exist), we provide efficient methods to verify--in advance--if possible line overloads can be cleared during the secondary control after any MAD attacks. We then define the $\alpha D$-robustness notion for the grids indicating that any line failures can be cleared during the secondary control if an adversary can increase/decrease the demands by $\alpha$ fraction. We demonstrate that practical upper and lower bounds on the maximum $\alpha$ for which the grid is $\alpha D$-robust can be found efficiently in polynomial time. Finally, we evaluate the performance of the developed algorithms and methods on realistic power grid test cases. Our work provides the first methods for protecting the grid against potential line failures caused by MAD attacks.


\end{abstract}
\maketitle

\section{Introduction}\label{sec:introduction}

Power grid, as one of the most essential infrastructure networks, has been repeatedly evinced in the past couple of years to be vulnerable to cyber attacks. The most infamous example of these attacks was on Ukrainian grid that affected about 225,000 people in December 2015~\cite{UkraineBlackout}. However, smaller scale attacks on reginal power grids have been shown in a recent report to be more common and pervasive~\cite{cyberattackUS}. As indicated in the report, \emph{``Hackers are developing a penchant for attacks on energy infrastructure because of the impact the sector has on people's lives"}~\cite{cyberattackUS}.

Because of this ever-growing threat, there has been a significant effort by researchers in recent years to protect the grid against cyber attacks. These efforts has been mainly focused on potential attacks that directly affect different components of power grids' Supervisory Control And Data Acquisition (SCADA) systems. Many system operators prefer to completely disconnect their SCADA systems from the Internet in the hope that their systems remain unreachable to hackers. 

Despite these efforts, the \emph{power demand} side of the grid operation which is not controlled by SCADA, has been justifiably neglected in security assessments due to their predictable nature. However, as recently revealed by Soltan et al.~\cite{soltan2018blackIoT} and Dabrowski et al.~\cite{dabrowski2017grid}, the universality and growth in the number of high-wattage Internet of Things (IoT) devices, such as air conditioners and water heaters, have provided a unique way for adversaries to \emph{disrupt the normal operation of power grid, without any access to the SCADA system.} In particular, an adversary with access to sufficiently many of such high-wattage devices (i.e, a \emph{botnet}), can abruptly increase or decrease the total demand in the system by synchronously turning these devices on or off, respectively. We call these attacks \underline{MA}nipulation of \underline{D}emand (MAD) attacks (see Fig.~\ref{fig:idea}).

\begin{figure}[t]
\centering
\includegraphics[scale=0.35]{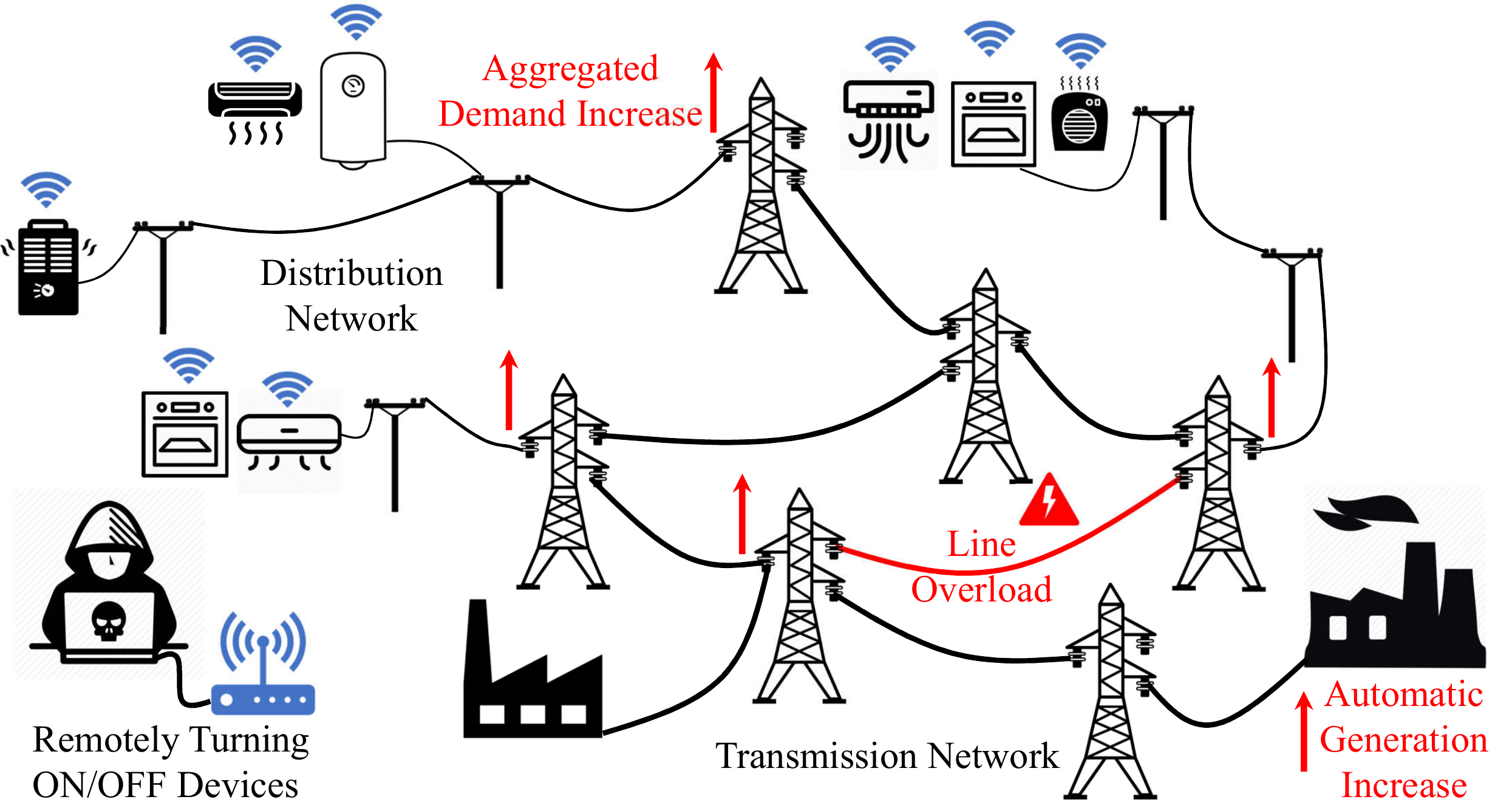}
\caption{The MAD attack. An adversary with access to an IoT botnet of high-wattage devices can remotely and synchronously switch on/off these devices in order to change power flows on the lines in power grid transmission network and cause line overloads and failures.} 
\label{fig:idea}
\end{figure}

An abrupt increase/decrease in the total demand, results in abrupt drop/rise in the system's frequency. If this drop/rise is significant, 
generators will be automatically disconnected from the grid and a large scale blackout occurs within seconds~\cite{soltan2018blackIoT,dabrowski2017grid}. 
If the drop/rise in the frequency is not significant, the extra demand/generation can automatically be compensated by generators' primary controllers, and the frequency of the system will be stabilized. As a result of this automatic change in generation--and demand by the adversary--the power flows in the transmission network change based on power flow equations. Since the power flows are not controlled by the grid operator at this stage, this change in the power flows may result in line overloads and consequent line-trippings. These initial line failures can initiate a cascading line failure and result in a large scale black out in the grid~\cite{soltan2018blackIoT}. 

The grid operator can protect the grid against initial drop/rise in the system's frequency caused by a MAD attack by ensuring that the system have enough \emph{inertia} (mostly through rotating generators) and there is enough available \emph{spinning reserve} (i.e., generators have enough extra generation capacity)~\cite{soltan2018blackIoT}. However, protecting the grid against possible line overloads and failures after a MAD attack, which is the main focus of this paper, is more analytically and computationally challenging. Such defences require the grid operator to analyze all possible MAD attacks and their consequences on the power flows and select operating points for the generators such that no lines are overloaded after any MAD attacks. 

We first focus on finding operating points (namely \emph{robust operating points}) with the minimum cost for the generators such that no lines are overloaded after the automatic primary response of the generators to any MAD attacks.  Since changes in power flows after a MAD attack directly depend on generators' operating points, finding the optimal operating points for the generators requires solving a nonconvex and nonlinear optimization problem which is hard in general. Despite this hardness, we develop two algorithms named \underline{S}ecuring \underline{A}dditional margin \underline{F}or generators in \underline{E}conomic dispatch (SAFE) Algorithm and \underline{I}teratively \underline{M}ini\underline{M}ize and bo\underline{UN}d \underline{E}conomic dispatch (IMMUNE) Algorithm for finding suboptimal yet robust operating points for the generators efficiently. The SAFE Algorithm provides robust operating points for the generators by solving a single Linear Program (LP). The IMMUNE Algorithm on the other hand requires a few iterations until it converges, but it provides robust operating points with lower costs than the ones obtained by the SAFE Algorithm.

In situations that the operating cost of the grid in a robust state is costly (or no robust operating points exist due to lack of enough resources), the grid operator may decide to allow temporary line overloads--by increasing thresholds on circuit breakers--in the case of a MAD attack, and clear the overloads during the \emph{secondary control}. During the secondary control, which comes right after the automatic primary control, the grid operator can directly change generators' operating points in order to bring back the system's frequency to its nominal value and clear any line overloads. To make sure that line overloads can be cleared during the secondary control, the grid operator needs to verify in advance whether for any potential MAD attack, there exist operating points for the generators satisfying demands such that no lines are overloaded (namely, the grid is \emph{secondary controllable}). However, due to the extent of the attack space, checking all possible attack scenarios is numerically impossible. Hence, we develop several predetermined control polices that can be used to verify the secondary controllability of the grid in most scenarios with no false positives.

Based on the secondary controllability notion, we then evaluate the robustness of a grid against MAD attacks. A magnitude of a MAD attack is defined based on the size of an adversary's botnet. If these bots are distributed proportionally to the demand sizes across a region, then the magnitude of an attack can be denoted by $\alpha$ fraction of demand that the adversary can increase or decrease at each location. We call a grid that is secondary controllable against any MAD attacks that change at most $\alpha$ fraction of demands, $\alpha D$-robust. In general, finding maximum $\alpha$ such that a given grid is $\alpha D$-robust, is hard. However, based on developed predetermined control policies, we provide efficient methods to compute practical upper and lower bounds for the maximum $\alpha$.

Finally, we numerically evaluate the performance of the developed algorithms and controllers. For example, in New England 39-bus system, we show that the SAFE and IMMUNE Algorithms find operating points for the generators with at most 6 and 2 percent increase in the total operating cost such that the grid is robust against MAD attacks of magnitude $\alpha =0.08$. We also evaluate the performance of the developed methods for approximating the maximum $\alpha$ that grid is $\alpha D$-robust and show that for example in New England 39-bus system, the provided lower and upper bounds are tight and are equal to the maximum $\alpha^{\max}=0.0962$.

\emph{To the best of knowledge, our work is the first to study the effects of potential MAD attacks via high-wattage IoT devices on the power flows in the grid and provide efficient preventive algorithms to avoid line failures after primary control response, and also efficient methods to verify if the line overloads can be cleared during the secondary control. These algorithms and methods can be adopted by grid operators to protect their systems against MAD attacks now and in the near future.}

The rest of this paper is organized as follows: Section~\ref{sec:related} provides related work and Section~\ref{sec:model} presents a brief introduction to power system's operation and control. In Section~\ref{sec:consequences}, we introduce the MAD attacks and provide their basic properties. In Section~\ref{sec:primary}, we present the SAFE and IMMUNE algorithms and in Section~\ref{sec:secondary}, we provide efficient methods for verifying secondary controllability of a grid. Section~\ref{sec:contingency} provides methods to evaluate robustness of grids against MAD attacks and Section~\ref{sec:numerical} presents numerical results. Finally, Section~\ref{sec:conclusions} provides concluding remarks and future directions. Due to space constraints, some proofs are presented in the Appendix.
\section{Related Work}\label{sec:related}

Power systems' vulnerability to failures and attacks has been widely studied in the past few years~\cite{dobson2015cascading,soltan2015analyz,hale2016ACDC,bienstock2016electrical,carreras2002critical}. In a recent work~\cite{garcia2017hey}, Garcia et al.\ introduced Harvey malware that affects power grid control systems and can execute malicious commands.
Theoretical methods for detecting cyber attacks on power grids and recovering information after such attacks have also been developed~\cite{liu2011false,dan2010stealth,li2015quickest,kim2015subspace,SYZ2015,bienstock2017computing,li2016bilevel,deng2017ccpa,zhang2016physical}. However, most of the previous work have focused on attacks that directly target the power grid's physical infrastructure or its control system.


Load altering attacks on smart meters and large cloud servers has been first introduced by Mohsenian et al.~\cite{mohsenian2011distributed}. Their work was mostly focused on the cost of protecting the grid against such attacks at loads. 
Amini et al.~\cite{amini2016dynamic} have also recently studied the effects of load altering attacks on the dynamics of the system and ways to use the system's frequency as feed-back to improve an attack. In three very recent papers, Dvorkin and Sang~\cite{dvorkin2017IoT}, Dabrowski et al.~\cite{dabrowski2017grid}, and Soltan et al.~\cite{soltan2018blackIoT} revealed the possibility of exploiting compromised IoT devices to manipulate the demands and to disrupt normal operation of the power grid. Dvorkin and Sang~\cite{dvorkin2017IoT} modeled their attack as an optimization problem for the adversary--with complete knowledge of the grid--to cause circuit breakers to trip in the distribution network. 
Dabrowski et al.~\cite{dabrowski2017grid} studied the effect of demand increases caused by remote activation of CPUs, GPUs, hard disks, screen brightness, and printers on the frequency of the European power grid. Soltan et al.~\cite{soltan2018blackIoT} analyzed the effects of sudden increase and decrease in the demand via an IoT botnet of high-wattage devices from different operational perspectives and demonstrated that besides frequency instability, such attacks can also result in wide spread cascading line failures. \emph{To the best of our knowledge, however, the work presented in this paper is the first to provide practical defences against possible line failures caused by attacks that target the demand side of power grids.}

\section{Model and Definitions}\label{sec:model}
In this section, we provide a brief introduction to power systems' operation and control. Our focus is on power transmission network.

Throughout this paper we use bold uppercase characters to denote matrices (e.g., $\textbf{A}$), italic uppercase characters to denote sets (e.g., $V$), and italic lowercase characters and overline arrow to denote column vectors (e.g., $\vec{\theta}$). For a matrix $\textbf{Q}$, $\textbf{Q}_i$ denotes its $i^{\text{th}}$ row, $q_{ij}$ denotes its $(i,j)^{\text{th}}$ entry, and $\textbf{Q}^T$ denotes its transpose. For a column vector $\vec{y}$, $\vec{y}^T$ denotes its transpose, and $\|\vec{y}\|_1:=\sum_{i=1}^n |y_i|$ is its $l_1$-norm. 
For a variable $x$, $\sgn(x)$ denotes its sign, and $\overline{x}$ and $\underline{x}$ denote its upper and lower limits, respectively. For a vector $\vec{y}$, for simplicity of notation, we drop the vector sign $~\vec{}~$ in denoting vectors of upper and lower limits on the entries of $\vec{y}$ as $\overline{y}$ and $\underline{y}$, respectively. Finally, $\vec{e}_1,\dots,\vec{e}_n$ denote the fundamental basis of $\mathbb{R}^n$ and $\vec{1}=\sum_{i=1}^n \vec{e}_i$ denotes the all ones vector.
\subsection{Power Flows}\label{subsec:powerflows}
Power flows are governed by a set of differential equations. In the steady-state, using \emph{phasors}, these differential equations can be reduced to a set of algebraic equations on complex numbers known as \emph{Alternating Current (AC)} power flow model. Due to nonlinearity of AC power flow equations and the computational complexity of solving these equations, in practice and in day-ahead power grid contingency analysis and planning, the linearized version of these equations known as \emph{Direct Current (DC)} power flow model is widely being used~\cite{wood2012power}. Hence, in this work, for simplicity of presentation, we also adapt the DC power flow model for our analysis. However, the main idea of the algorithms developed in this work can be extended to the AC power flow models as well.

We represent the power grid by a connected directed graph $G=(V,E)$ where $V=\{1,2,\dots,n\}$ and $E=\{e_1,\dots,e_m\}$ are the set of nodes and edges corresponding to the \emph{buses} and \emph{transmission lines}, respectively (the definition implies $|V|=n$ and $|E|=m$). Each edge $e$ is a set of two nodes $e=(i,j)$. (Direction of the edges are arbitrary.)
$\vec{p_d}\geq 0$ and $\vec{p_g}\geq 0$ denote the vector of power demand and supply values, respectively. Accordingly, $\vec{p}=\vec{p_g}-\vec{p_d}$ denotes the vector of total supply and demand values. Since sum of supply should be equal to sum of demand,
\begin{equation}\label{eq:p}
\vec{1}^T\vec{p}=0,
\end{equation}
in which $\vec{1}$ is an all ones vector.
In the DC model, lines are also assumed to be \emph{purely reactive}, implying that each edge $e=(i,j) \in E$ is characterized by its \emph{reactance} $x_e=x_{ij}>0$.

Given the power supply/demand vector $\vec{p}\in \mathbb{R}^{n\times1}$ and the reactance values, the vector of power flows on the lines $\vec{f}\in \mathbb{R}^{m\times1}$ can be computed by solving following linear equations:
\begin{eqnarray}
\label{eqn:flow1}&&\A\tet=\vec{p},\\
\label{eqn:flow2}&&\Y\D^T\tet = \vec{f},
\end{eqnarray}
where $\tet\in\mathbb{R}^{n\times 1}$ is the vector of voltage phase angles at nodes, 
$\D\in\{-1,0,1\}^{n\times m}$ is the \emph{incidence matrix} of $G$ defined as,
\begin{equation*}
d_{ik}=
\begin{cases}
0&\text{if}~e_k~\text{is not incident to node}~i,\\
1&\text{if}~e_k~\text{is coming out of node}~i,\\
-1&\text{if}~e_k~\text{is going into node}~i,
\end{cases}
\end{equation*}
$\Y:=\diag([1/x_{e_1},1/x_{e_2},\dots, 1/x_{e_m}])$ is a diagonal matrix with diagonal entries equal to the inverse of the reactance values, and  $\A=\D\Y\D^T$ is the \textit{admittance matrix} of $G$.\footnote{The admittance matrix $\textbf{A}$ is also known as the \emph{weighted Laplacian matrix} of the graph~\cite{bapat2010graphs} in graph theory.}

Since $\A$ is not a full-rank matrix, we follow~\cite{soltan2015analyz} and use the \emph{pseudo-inverse} of $\A$, denoted by $\A^+$ to solve (\ref{eqn:flow1}) as $\tet=\A^+\vec{p}$. Once $\vec{\theta}$ is computed, $\vec{f}$ can be computed from~(\ref{eqn:flow2}) as $\vec{f} = \Y\D^T\A^+\vec{p}$. For convenience of notation, we define $\B:= \Y\D^T\A^+$. Hence, $\vec{f}=\B\vec{p}$.

\subsection{Power Grid Operation}\label{subsec:opf}

Stable operation of the power grid  relies on the persistent balance between the power supply and  demand. 
In order to keep the balance between the power supply and the demand, power system operators use weather data as well as historical power consumption data to predict the power demand on a daily and hourly basis~\cite{federal2012energy}. This allows the system operators to plan in advance and only deploy enough generators to meet the demand in the hours ahead without overloading any power lines. This planning ahead consists of two parts: \emph{unit commitment} and \emph{economic dispatch}.

In unit commitment which is mainly performed daily, the grid operator selects a set of generators to \emph{commit} their availability during the day-ahead operation of the grid. But the actual operating points of the generators (i.e., generation outputs) are determined by the operator during the day and in the process known as \emph{economic dispatch}. The main goal of operator during economic dispatch is to ensure reliable operation of the grid with minimum power generation cost. When feasibility of the power flows is also considered during economic dispatch, the process is also known as \emph{Optimal Power Flow (OPF)} problem. Since in practice feasibility of power flows is always being considered, these two terms can be used interchangeably most of the times.

In this work, we mainly focus on ensuring robustness of the grid during the economic dispatch. Extending our methods to unit commitment process is beyond the scope of this paper and is part of the future work. Hence, here we assume that set of available generators are given. The main challenge is to obtain a favorable operating point for these generators.


\subsubsection{Optimal Power Flow}
In the OPF problem, given the vector of predicted demand values $\vec{p_d}$, the grid operator needs to find the operating point vector $\vec{p_g}$ for the generators such that supply matches the demand (i.e., $\vec{1}^T(\vec{p_g}-\vec{p_d})=0$), the operating and physical constraints are satisfied, and the operating cost of the generators are minimized.

In particular, each line $f_{ij}$ has a thermal power flow limit $\overline{f_{ij}}$ limiting the amount of power that a line can \emph{safely} carries. If the power flow on a line goes above this limit (i.e., \emph{overloads}), in most of the cases, it will be tripped by a circuit breaker in order to keep the line from breaking due to overheating. Hence, during the normal operation of the grid
\begin{equation}\label{eq:f}
|f_{ij}|\leq \overline{f_{ij}}, \quad \forall (i,j)\in E.
\end{equation}
The amount of power that each generator $p_{gi}$ is generating is also limited by a maximum ($\overline{p_{gi}}$) and a minimum ($\underline{p_{gi}}$) value. If there are no generators at node $i$, then $\overline{p_{gi}}=\underline{p_{gi}}=0$. Hence,
\begin{equation}\label{eq:pg}
\underline{p_g}\leq \vec{p_g}\leq \overline{p_g}.
\end{equation}

The generation cost at each generator is a given by a cost function $c_i(x)$ in $\$/hr$. Given these cost functions, the OPF problem can be formulated as follows:
\begin{align}
\label{eq:dcopf}&\min_{\tet,\vec{f},\vec{p_g}}&&\sum_{l=1}^n c_l(p_{gi}),\\
&\qquad\text{s.t.}& &(\ref{eq:p}),(\ref{eqn:flow1}),(\ref{eqn:flow2}), (\ref{eq:f}), (\ref{eq:pg}),\nonumber\\
&&&\vec{p}=\vec{p_g}-\vec{p_d}.\nonumber
\end{align}
Several methods for optimally solving (\ref{eq:dcopf}) depending on the cost functions exist in literature~\cite{wood2012power}. 
Hence, we assume that (\ref{eq:dcopf}) can be solved either optimally or approximately using existing methods. 
%
\subsection{Frequency control}\label{subsec:frequency}
In power systems, the rotating speed of generators correspond to the frequency. When demand becomes greater than supply, the rotating speeds of turbine generators' rotors decelerate, and the kinetic energy  of the rotors are released into the system in response to the extra demand. Correspondingly, this causes a drop in the system's frequency. This behavior of turbine generators corresponds to Newton's first law of motion and is calculated by the \emph{inertia} of the generators. Similarly, the supply being greater than the demand results in acceleration of the generators' rotors and a rise in the system's frequency. 

This decrease/increase in the frequency of the system cannot be tolerated for a long time since frequencies lower than their nominal value severely damage the generators. If the frequency goes above or below a threshold value, protection relays turn off or disconnect the generators completely. 
Hence, in case of a demand increase, within seconds of the first signs of decrease in the frequency, the \emph{primary controllers} at generators activate and increase the mechanical input to the generators which increase the speed of the generator's rotor and correspondingly the generator's output and frequency of the system~\cite{entsoe2004continental}. The rate of decrease/increase in frequency of the system, before activation of the primary controllers, directly depends on the total \emph{inertia} of the system. Systems with higher number of rotating generators have higher inertia and therefore are more robust against sudden demand changes or generation losses.

The rate of increase in the output generation of generator $i$ during the primary control is determined by its \emph{governor droop characteristic} denoted by $R_i$~\cite[Chapter 9]{machowski1997power}. 
Hence, after a change in the total demand by $S_{\Delta p_d}$, primary controller of each generator $i$ increases its output with rate $R_i$ until the total generation is equal to the demand again. In particular, if none of the generators reach their generation limit, each generator $i$ will increase its generation by $S_{\Delta p_d}/R_i$. The amount of power that generators can provide during the primary control is called the \emph{spinning reserve} of the generators.

\begin{figure}[t]
\centering
\includegraphics[scale=0.28]{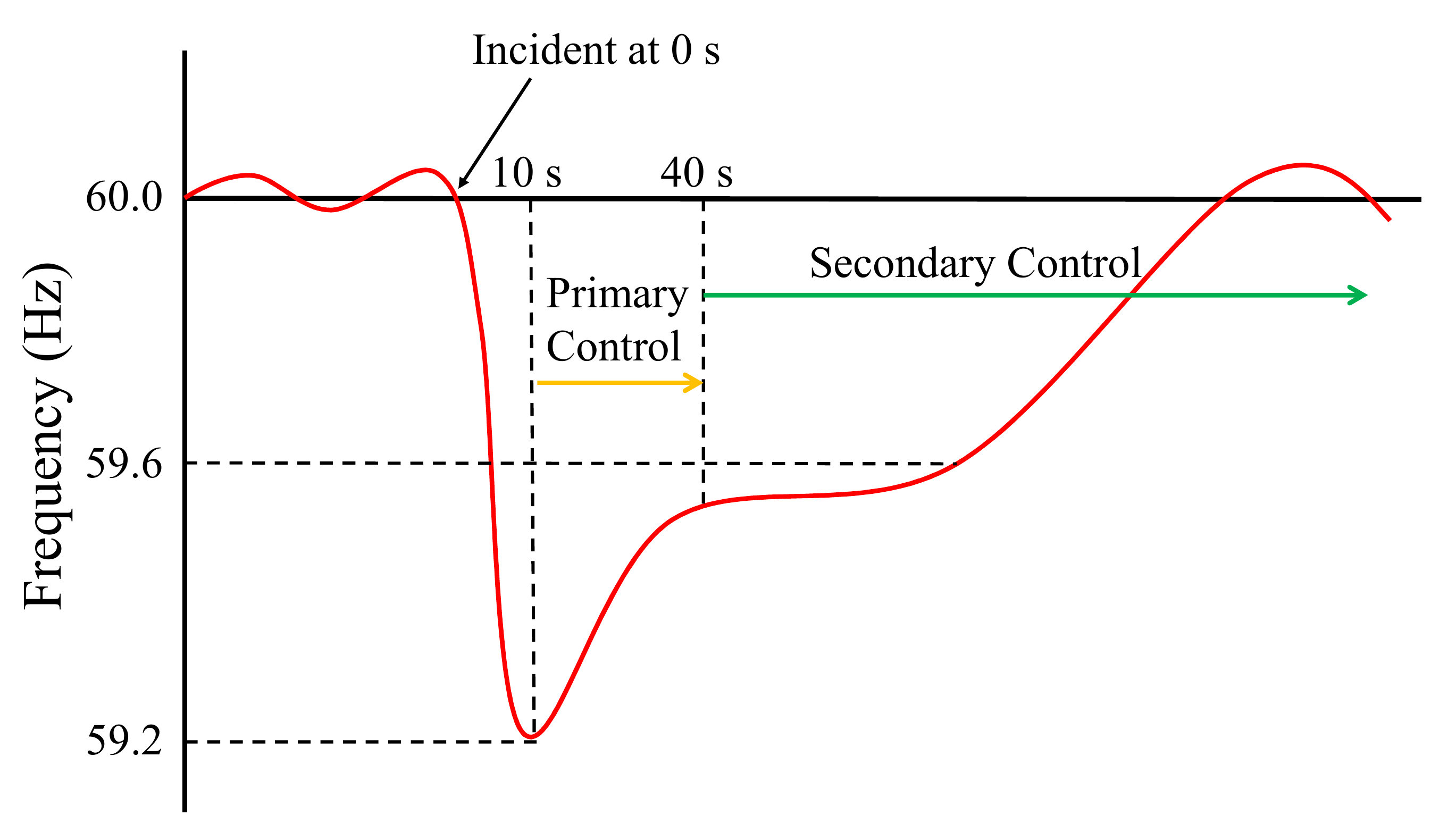}
\caption{A sample frequency response of the power grid to a sudden increase in the demand (or loss of generation).}
\label{fig:frequency}
\end{figure}



Despite stability of the system's frequency after the primary controllers' response, it may not return to its nominal value (since generators generating more than their generating set points). 
Hence, the \emph{secondary controller} starts within minutes to restore the system's frequency. The secondary controller modifies the power set points and deploys available extra generators and controllable demands to restore the nominal frequency and permanently stabilizes the system.\footnote{Part of these controls can be done during the \emph{tertiary control}. However, for simplicity and without loss of generality we refer to them as the secondary control.} 
Fig.~\ref{fig:frequency} presents an example of the way frequency of the system changes after a sudden increase in the demand (or loss of generation) at time 0.

\section{MAD Attacks}\label{sec:consequences}

We assume that an adversary has already gained access to an IoT botnet of many high-wattage smart appliances within a city, a country, or a continent. 
Such access can potentially allow the adversary to increase or decrease the demand at different locations \emph{remotely and synchronously} at a certain time. We call the attacks under this threat model the 
 \underline{MA}nipulation of the \underline{D}emand (MAD) attacks.

The adversary's power to manipulate the demand can be modeled by the maximum and minimum demand changes that it can make at each node. In particular, we assume the demand changes at node $l$ by an adversary are bounded by $-\overline{\Delta p_{dl}}\leq \Delta p_{dl}\leq \overline{\Delta p_{dl}}$. 
Notice that from defensive point of view, there are no differences between an adversary with the total knowledge of the system (a.k.a \emph{white-box} attacks) and an adversary with no knowledge of the system (a.k.a \emph{black-box} attacks), since the operator needs to make sure that the grid is robust against \emph{any possible attacks}.

The initial effect of a MAD attack, as described in Section~\ref{subsec:frequency} is on the frequency of the system. However, the system operator can make the system robust against frequency disturbances caused by MAD attacks by ensuring that enough generators with inertia and spinning reserve are committed to operate during the unit commitment process~\cite{soltan2018blackIoT}. The minimum required inertia and spinning reserve should be computed based on the potential attack size and the properties of the grid. Devices that provide virtual inertia such as batteries, super-capacitors, and flywheels can also be integrated into the system to increase the total inertia~\cite{hebner2002flywheel}. 

Hence, the main challenge in protecting the grid against initial effects of MAD attacks is in the hardware level. However, the effects of MAD attacks are not limited to frequency disturbances. Recall from Section~\ref{subsec:powerflows} that the power flows in power grids are determined uniquely given supply and demand values. Therefore, most of the time, the grid operator does not have any control over the power flows from generators to loads. Once an adversary causes a sudden increase in the loads all around the grid, assuming that the frequency drop is not significant, the extra demand is satisfied automatically by generators through their primary controllers as described in Section~\ref{subsec:frequency}. Since the power flows are not controlled by the grid operator at this stage, this change in supply and demand may result in line overloads and consequent line-trippings~\cite{soltan2018blackIoT}.

If the primary controllers' response results in line overloads, assuming that these overloads can barely be tolerated for a short period of time, these line overloads can be cleared during the secondary control. However, the system operator needs to ensure in advance that possible line overloads can indeed be cleared during the secondary control after any MAD attacks.

\emph{In this work, we focus on the effects of MAD attacks on the power flow changes on the lines which are more challenging from system planning perspective. Our objectives are: (i) to develop algorithms for finding efficient operating points for the generators during the economic dispatch such that no lines are overloaded after the primary control response to any potential MAD attacks, and (ii) to design methods to efficiently examine if line overloads after the primary control--if any--can be cleared during the secondary control.}

\section{Power Flows: Primary Control}\label{sec:primary}

In this section, we provide two algorithms for finding operating points for the generators during the economic dispatch process such that no lines are overloaded after automatic response of the primary controllers to any MAD attacks. We call such operating points, \emph{robust operating points}. 

\begin{figure*}[t]
\centering
\begin{subfigure}{0.245\textwidth}
\centering
\includegraphics[scale=0.4]{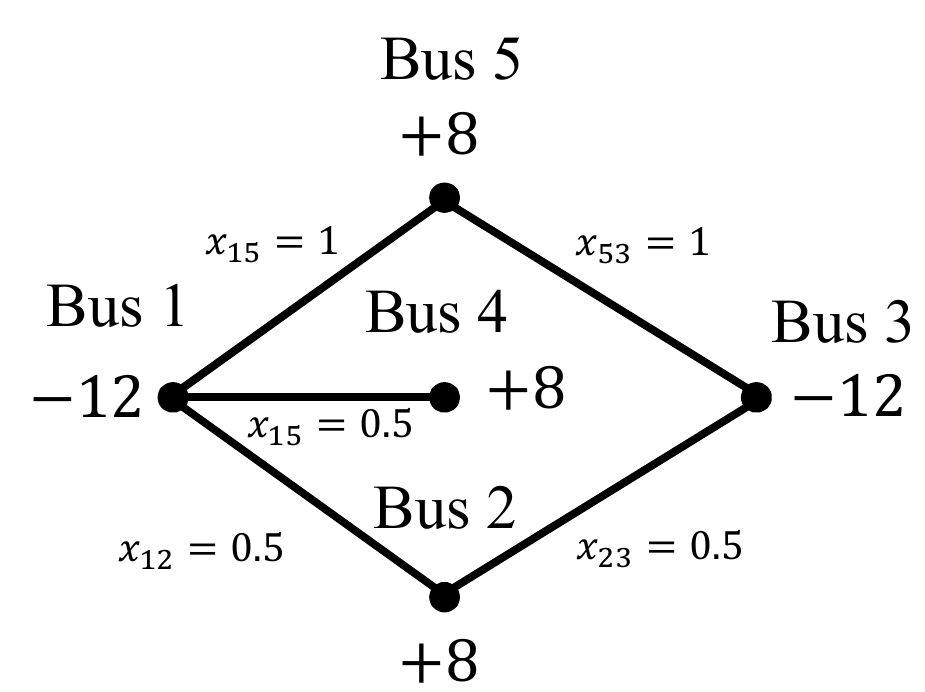}
\caption{}
\label{fig:exp1}
\end{subfigure}
\begin{subfigure}{0.245\textwidth}
\centering
\includegraphics[scale=0.4]{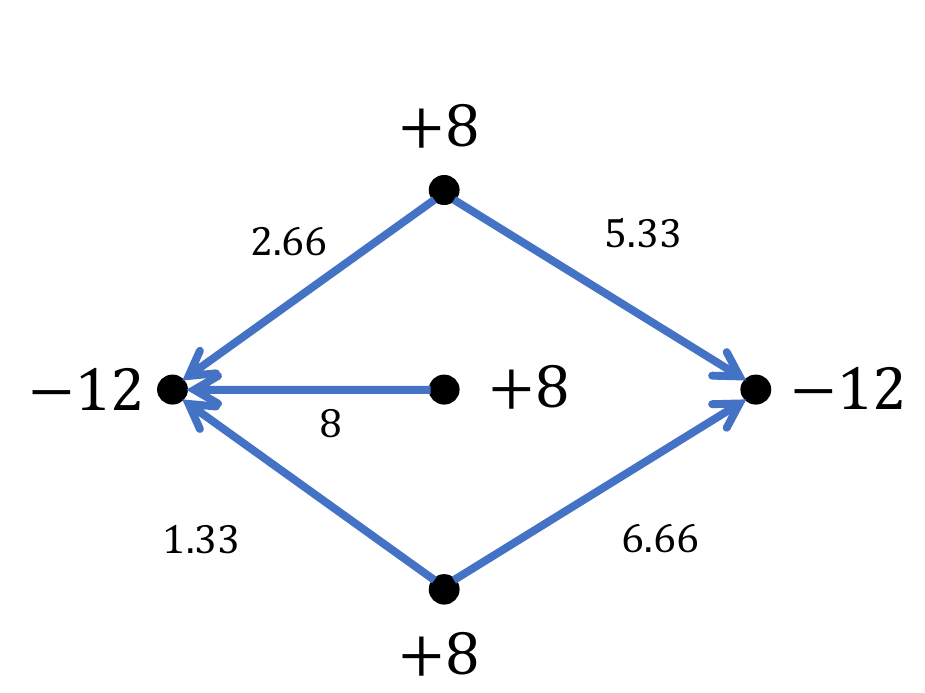}
\caption{}
\label{fig:exp2}
\end{subfigure}
\begin{subfigure}{0.245\textwidth}
\centering
\includegraphics[scale=0.4]{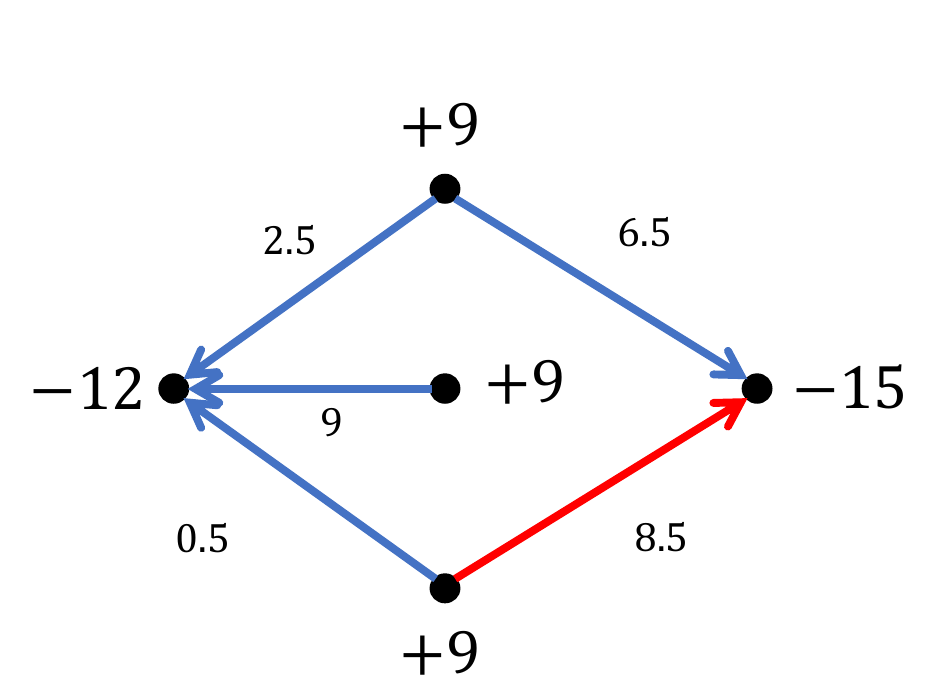}
\caption{}
\label{fig:exp3}
\end{subfigure}
\begin{subfigure}{0.245\textwidth}
\centering
\includegraphics[scale=0.4]{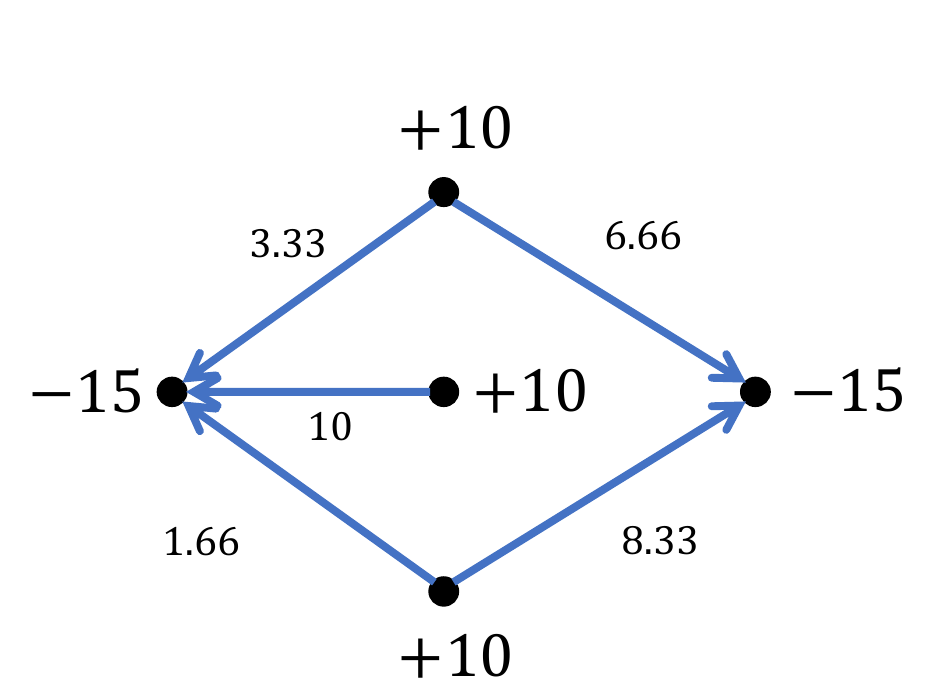}
\caption{}
\label{fig:exp4}
\end{subfigure}
\caption{An example demonstrating that increasing all demands may not necessarily result in the maximum flow on the lines. (a-b) Initial setting and power flows, (c) power flows if demand at bus 3 increases, and (d) power flows if demand at both buses 1 and 3 increases. All generators have the same droop characteristic and they all have enough spinning reserve.}
\label{fig:Example}
\end{figure*}
\begin{figure}[t]
\centering
\begin{subfigure}{0.23\textwidth}
\centering
\includegraphics[scale=0.4]{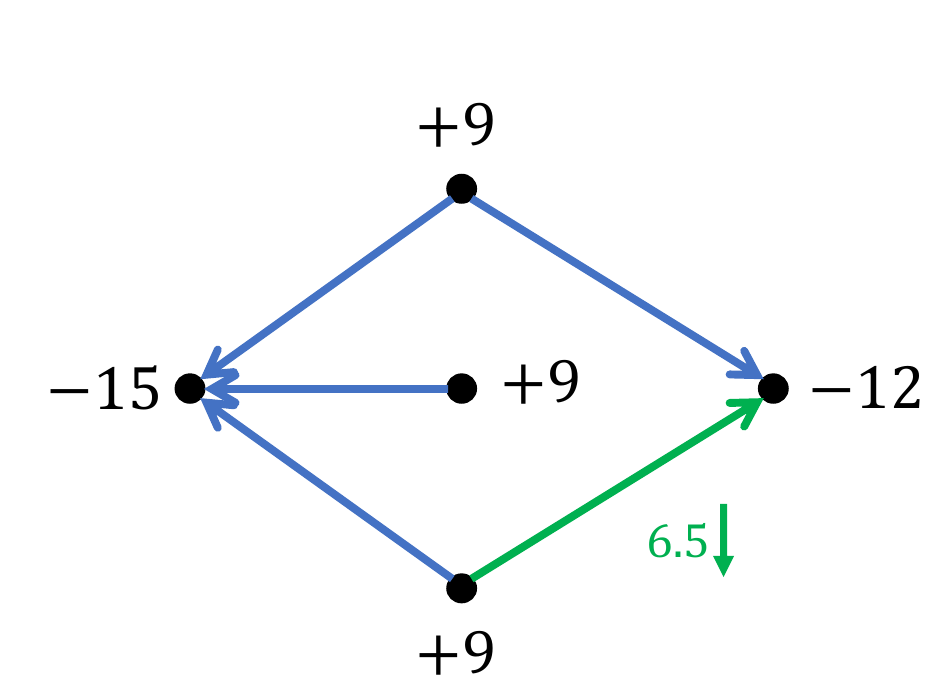}
\caption{}
\label{fig:exp21}
\end{subfigure}
\begin{subfigure}{0.23\textwidth}
\centering
\includegraphics[scale=0.4]{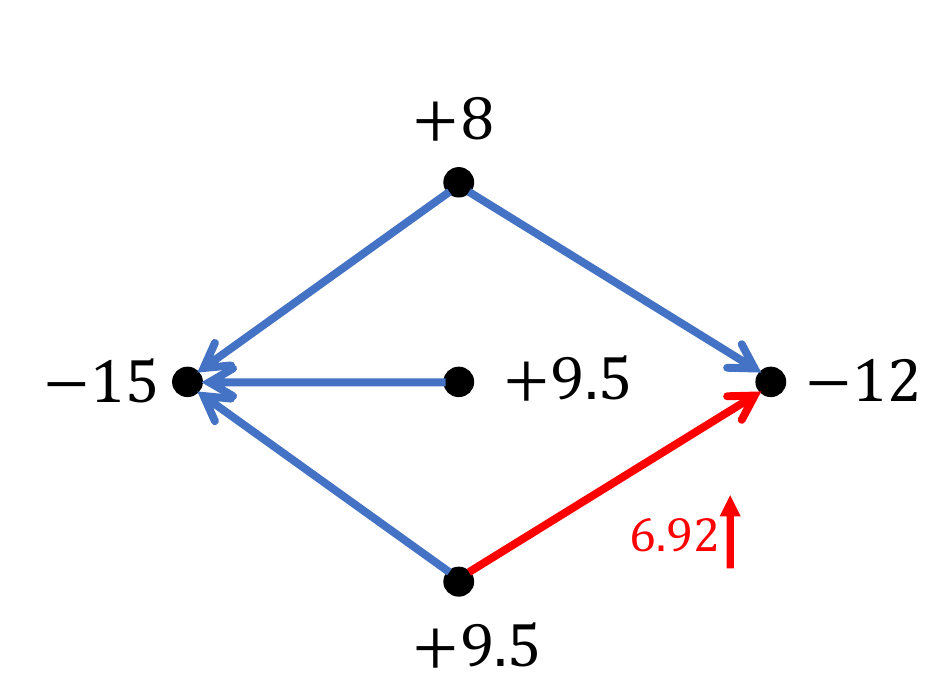}
\caption{}
\label{fig:exp22}
\end{subfigure}
\caption{Dependency of power flow changes on the location of the spinning reserves. (a) If all generators have spinning reserves, demand increase at bus 1 results in power flow decrease on line $\{2,3\}$. (b) If only generators 2 and 4 have spinning reserves then demand increase at bus 1 results power flow increase on line $\{2,3\}$.} 
\label{fig:Example2}
\end{figure}

\begin{figure}[t]
\centering
\begin{subfigure}{0.23\textwidth}
\centering
\includegraphics[scale=0.3]{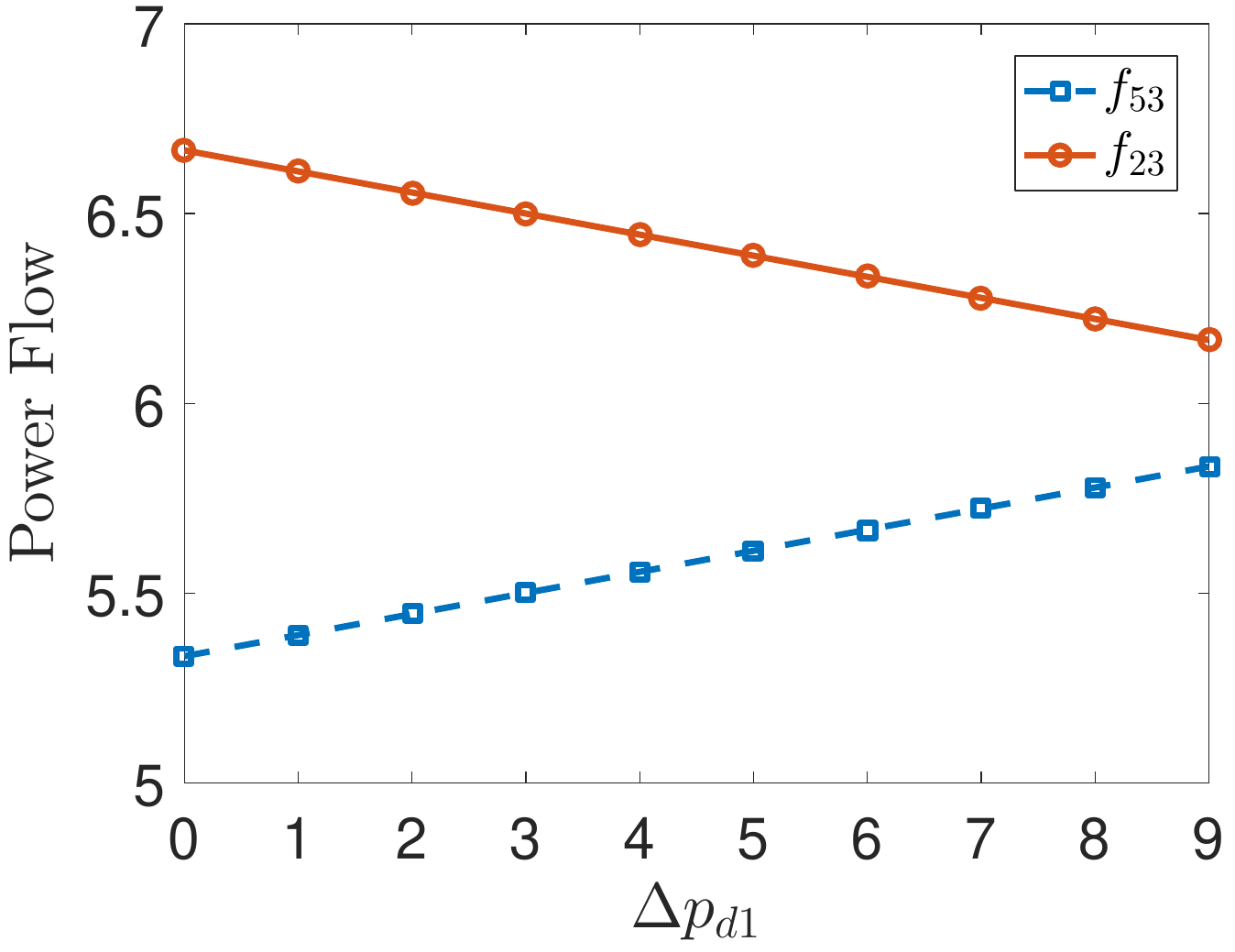}
\caption{}
\label{subfig:flow_spin_nolimit}
\end{subfigure}
\begin{subfigure}{0.23\textwidth}
\centering
\includegraphics[scale=0.3]{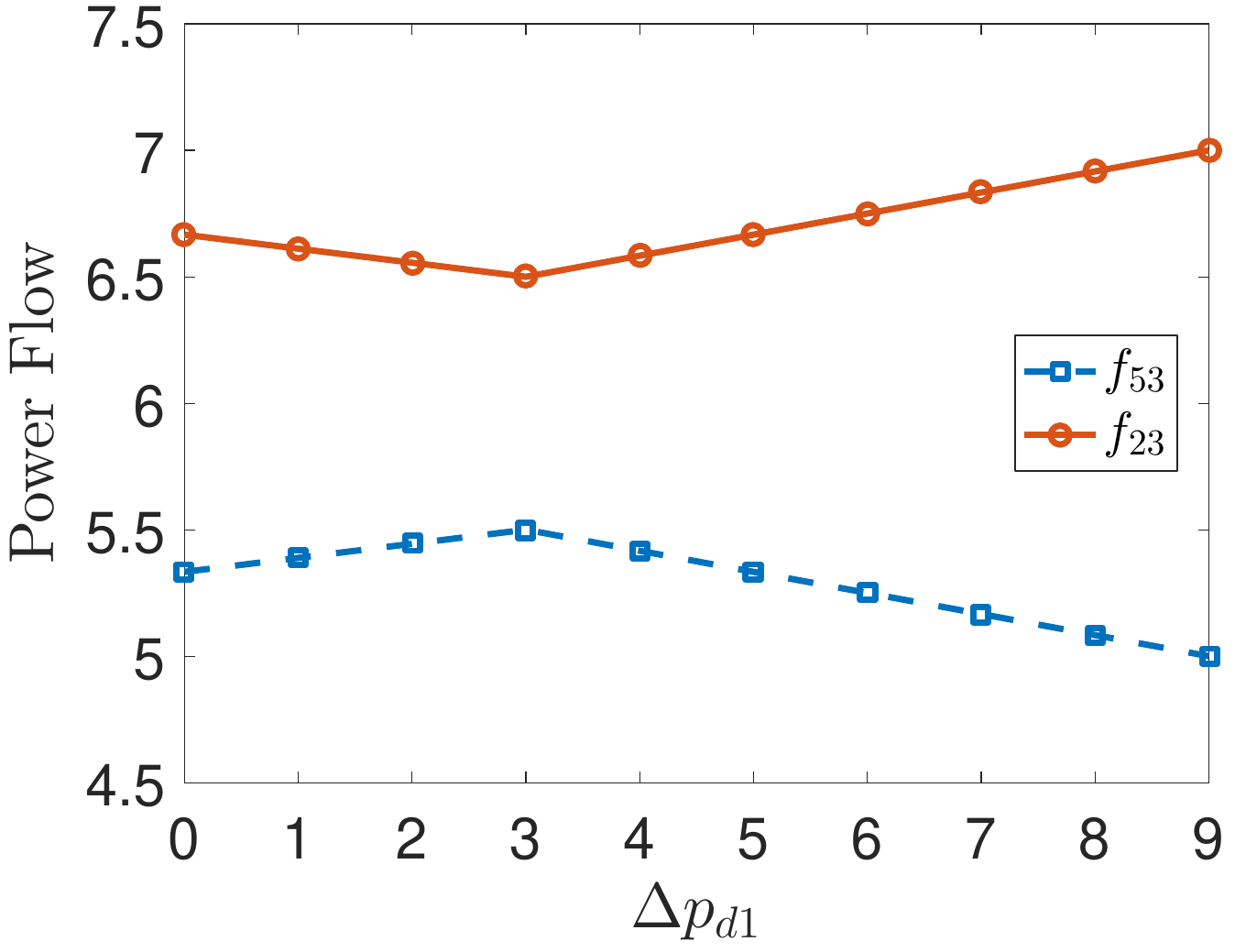}
\caption{}
\label{subfig:flow_spin_limit}
\end{subfigure}
\caption{Power flows on lines $\{5,3\}$ and $\{2,3\}$ in the grid shown in Fig.~\ref{fig:exp1} as demand at bus 1 increases. (a) If all the generators have enough spinning reserve, and (b) if generator 5 has only 1 unit of spinning reserve.}

\label{fig:Example_Flows}
\end{figure}





\subsection{Power Flow Changes}\label{subsec:property}

In this subsection, we present couple of examples in order to demonstrate complexity of power flow analysis after the \emph{primary controller's response to a MAD attack}.

First, as can be seen in Fig.~\ref{fig:Example} the relationship between the power flow changes on the lines and the demand changes are not intuitive. For example, flow on line $\{2,3\}$ is maximized when only the demand at node 3 increases (Fig.~\ref{fig:exp3}), whereas when demands at both nodes 1 and 3 increase, flow on line $\{2,3\}$ increases less (Fig.~\ref{fig:exp4}).

Another important factor affecting the amount of power flow changes on the lines, is amount of spinning reserve at each generator. For example, as can be seen in Fig.~\ref{fig:Example2}, an increase in the demand at node 1 by 3 units may result in power flow \emph{decrease} on line $\{2,3\}$, if all the generators have enough spinning reserves (Fig.~\ref{fig:exp21}). The same scenario, however, results in power flow \emph{increase} on line $\{2,3\}$, if only generators 2 and 4 have spinning reserves (Fig.~\ref{fig:exp22}). 

Fig.~\ref{fig:Example_Flows} presents the relationship between power flow changes on lines $\{2,3\}$ and $\{5,3\}$ versus power demand increase at node 1 during two different spinning reserve availability scenarios in the grid shown in Fig.~\ref{fig:exp1}. As can be seen in Fig.~\ref{subfig:flow_spin_nolimit}, if all generators have enough spinning reserve the power flows change monotonically with the demand change. However, as can be seen in Fig.~\ref{subfig:flow_spin_limit}, limited spinning reserve at generator 5 results in nonlinear relationship between the power flows and the demand change.

Following the examples provided in this subsection, it is clear that power flow changes on the lines after a MAD attack highly depend on the initial operating point of the grid and is a nonlinear problem in most cases. Despite the difficulties, however, in the next two subsections we provide efficient algorithms for finding efficient and robust operating points for the generators.

\subsection{SAFE Algorithm}\label{subsec:primary_nolimit}
 In order to avoid line overloads after the primary control response to a potential MAD attack, the grid operator needs to compute the maximum possible power flow changes on the lines following an attack (based on $\overline{\Delta p_{dl}}$ values) and enforce the power flows on the lines in OPF to be below their capacity minus the maximum possible changes. As shown in previous subsection, however, the maximum power flow changes on the lines depend on the operating point of the generators and their spinning reserve. Therefore, one cannot compute the maximum power flow changes on the lines independent of the operating points to be used in the OPF problem. 


 One way to circumvent this problem, is to enforce all the generators to have enough spinning reserves to keep the relationship between the power flow changes and demand changes linear (as in Fig.~\ref{subfig:flow_spin_nolimit}), and use this linear relationship to compute the maximum power flow changes on the lines based on the operating point of the generators. These values can then be added to the OPF problem without making the problem nonlinear and nonconvex.

For each load $i$, define $\vec{v}_i=[v_{i1},v_{i2},\dots,v_{in}]^T$ to denote the primary controllers' response to a unit demand increase at load $i$. If all generators have enough spinning reserve, each generator $j$ will increase its generation by $v_{ij}:=(1/R_j)/(\sum_{l=1}^n1/R_l)$ to compensate for a unit demand increase at node $i$ (as described in Section~\ref{subsec:frequency}). Hence, by defining $\vec{w}_i :=\vec{v}_i-\vec{e}_i$ (recall from Section~\ref{sec:model} that $\vec{e_i}$ is the i$^{\text{th}}$ fundamental basis of $\mathbb{R}^n$) one can compute the change in flow of line $e=(i,j)$ solely in terms of changes in the demands ($\Delta p_{di}$s):
\begin{align}\label{eq:delta_f}
\Delta f_{ij} = 1/x_{ij} (\A_i^+-\A_j^+) \sum_{l=1}^n \Delta p_{dl}  \vec{w}_l.
\end{align}
 Recall that $-\overline{\Delta p_{dl}}\leq \Delta p_{dl}\leq \overline{\Delta p_{dl}}$ based on the grid operator's estimation of the adversary's power. 
Hence, the maximum flow change on line $(i,j)$ can be computed using (\ref{eq:delta_f}) as:
\begin{equation}\label{eq:max_delta_f}
\Delta f_{ij}^{\max}= 1/x_{ij} \sum_{l=1}^n \overline{\Delta p_{dl}} |(\A_i^+-\A_j^+) \vec{w}_l|,
\end{equation}
since for each $l$, $\Delta p_{dl}$ can be selected by the adversary to be equal to $-\overline{\Delta p_{dl}}$, if $(\A_i^+-\A_j^+)^T \vec{w}_l<0$, and equal to $\overline{\Delta p_{dl}}$, if $(\A_i^+-\A_j^+) \vec{w}_l\geq0$.
Now, to ensure that no lines are overloaded after a MAD attack, all the system operator needs to do is to replace the capacity of each line $(i,j)$ in the OPF problem  by $\overline{f_{ij}}-\Delta f_{ij}^{\max}$. The only other constraint that needs to be added to the OPF problem is to make sure that each generator $i$ with $0<1/R_i$ has enough spinning reserve to increase its generation according to its governor droop. For this, define $\overline{S_{\Delta p_d}}:=\sum_{l=1}^n \overline{\Delta p_{dl}}$. Hence, each generator's operating point should be within following limits:
\begin{equation}\label{eq:pg_limits}
 \forall1\leq i\leq n:\underline{p_{gi}}+\frac{1/R_i}{\sum_{l=1}^n 1/R_l}\overline{S_{\Delta p_d}} \leq p_{gi} \leq \overline{p_{gi}}-\frac{1/R_i}{\sum_{l=1}^n 1/R_l} \overline{S_{\Delta p_d}}.
\end{equation}

Therefore, the robust OPF problem can be written as follows:
\begin{align}
\label{eq:ec_dis}&\min_{\tet,\vec{f},\vec{p_g}}& &\sum_{l=1}^n c_l(p_{gl}),\\
&\qquad\text{s.t.}& &(\ref{eq:p}),(\ref{eqn:flow1}),(\ref{eqn:flow2}), (\ref{eq:max_delta_f}), (\ref{eq:pg_limits}),\nonumber\\
&& &|f_{ij}|\leq \overline{f_{ij}}-\Delta f_{ij}^{\max}, \quad \forall(i,j)\in E\nonumber\\
&&&\vec{p}=\vec{p_g}-\vec{p_d}.\nonumber
\end{align}

We call the algorithm for finding a robust operating point for generators based on solving (\ref{eq:ec_dis}), the \underline{S}ecuring \underline{A}dditional margin \underline{F}or generators in \underline{E}conomic dispatch (SAFE) Algorithm. Since this algorithm limits the operating points of the generators, it is obvious that it may not obtain the \emph{minimum cost} robust operating points for the generators. In the next subsection, we provide an algorithm, albeit computationally more expensive, for finding robust operating points for the generators without limiting their generation.
\subsection{IMMUNE Algorithm}\label{subsec:primary_limit}
In (\ref{eq:delta_f}), we assumed that none of the generators reach their maximum/minimum capacity as they increase/decrease their generation according to their droop characteristics. 
However, by allowing some generators to reach their maximum/minimum capacity, one may find robust operating points for the generators with a lower cost.

In this subsection, we assume without loss of generality that the total demand change $S_{\Delta p_d}:=\sum_{i=1}^{d}\Delta p_{di}$ is positive. Hence, we focus mainly on generators' maximum capacity. However, same results hold for the case that $S_{\Delta p_d}<0$ as well.


Once a generator reaches its maximum capacity, it cannot increase its generation anymore, and therefore other generators should generate more to compensate for the extra demand. The following lemma provides the amount each generator generates based on its spinning reserve and governor droop characteristic to compensate for the extra demand after a MAD attack. 
\vspace*{0.2cm}
\begin{lemma}\label{lem:gen_order}
Suppose generators are ordered such that if $i<j$, $R_{i}(\overline{p_{gi}}-p_{gi})\leq R_j (\overline{p_{gj}}-p_{gj})$. Define $t_i:=R_{i}(\overline{p_{gi}}-p_{gi})$ and $S_i:=\sum_{l=1}^i t_l/R_l+\sum_{l=i+1}^n t_i/R_l$. If $S_i<S_{\Delta p_d}\leq S_{i+1}$, to compensate for the extra demand, generators 1 to $i$ reach their maximum capacity and each generator $j>i$ generates $\frac{1/R_{j}}{\sum_{l=i+1}^n 1/R_l} \big(S_{\Delta p_d}-\sum_{l=1}^i (\overline{p_{gl}}-p_{gl})\big)$.
%
\end{lemma}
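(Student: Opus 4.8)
The plan is to characterize the primary-control equilibrium through a single scalar---a common effort level $\lambda\ge 0$ (proportional to the magnitude of the system's frequency deviation) shared by all generators---and to track how generators saturate as $\lambda$ grows. Recall from Section~\ref{subsec:frequency} that under governor droop $R_j$ a generator that has not reached its cap raises its output to $\lambda/R_j$, while once it reaches its reserve $\overline{p_{gj}}-p_{gj}$ it stays pinned there. Thus each generator's response is $\Delta p_{gj}(\lambda)=\min\{\lambda/R_j,\ \overline{p_{gj}}-p_{gj}\}$, and the equilibrium $\lambda$ is the value fixing the power balance $\sum_{j=1}^n \Delta p_{gj}(\lambda)=S_{\Delta p_d}$. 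First I would check that this reproduces the unsaturated sharing of Section~\ref{subsec:frequency}: while every cap is slack, $\Delta p_{gj}=\lambda/R_j$ and balance gives $\lambda=S_{\Delta p_d}/\sum_l (1/R_l)$, so generator $j$ receives the fraction $(1/R_j)/\sum_l(1/R_l)$ of $S_{\Delta p_d}$, matching the $v_{ij}$ of the SAFE derivation.

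Next I would pin down the saturation order. Generator $j$ reaches its cap exactly when $\lambda/R_j=\overline{p_{gj}}-p_{gj}$, i.e.\ when $\lambda=R_j(\overline{p_{gj}}-p_{gj})=t_j$. Hence the ordering $t_1\le\dots\le t_n$ assumed in the statement is precisely the order in which generators hit their limits as $\lambda$ increases, and a saturated generator stays saturated because $\Delta p_{gj}(\lambda)$ is nondecreasing in $\lambda$. Therefore, for $\lambda\in[t_i,t_{i+1}]$ exactly generators $1,\dots,i$ are pinned at their reserves and generators $i+1,\dots,n$ remain on their droop lines. Evaluating $g(\lambda):=\sum_{j=1}^n \Delta p_{gj}(\lambda)$ at $\lambda=t_i$ gives $g(t_i)=\sum_{l=1}^i(\overline{p_{gl}}-p_{gl})+\sum_{l=i+1}^n t_i/R_l=\sum_{l=1}^i t_l/R_l+\sum_{l=i+1}^n t_i/R_l=S_i$, identifying $S_i$ as the total demand absorbed the instant generator $i$ saturates. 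A one-line computation gives $S_{i+1}-S_i=(t_{i+1}-t_i)\sum_{l=i+1}^n (1/R_l)\ge 0$, so the $S_i$ are ordered and the intervals $(S_i,S_{i+1}]$ tile the relevant range of $S_{\Delta p_d}$.

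Finally, given $S_i<S_{\Delta p_d}\le S_{i+1}$, I would invert the balance equation on the corresponding affine piece. There generators $1,\dots,i$ supply their full reserves, so balance reads $\sum_{l=1}^i(\overline{p_{gl}}-p_{gl})+\lambda\sum_{l=i+1}^n(1/R_l)=S_{\Delta p_d}$, whence $\lambda=\big(S_{\Delta p_d}-\sum_{l=1}^i(\overline{p_{gl}}-p_{gl})\big)/\sum_{l=i+1}^n(1/R_l)$. Substituting into $\Delta p_{gj}=\lambda/R_j$ for each $j>i$ returns exactly the claimed expression $\frac{1/R_j}{\sum_{l=i+1}^n 1/R_l}\big(S_{\Delta p_d}-\sum_{l=1}^i(\overline{p_{gl}}-p_{gl})\big)$, completing the argument. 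I expect the only genuine subtlety---and the step I would handle most carefully---to be the monotonicity and continuity bookkeeping: arguing that $g$ is continuous, piecewise linear, and strictly increasing so that the equilibrium $\lambda$ is unique, that saturation really proceeds in the order of the $t_i$ (with ties $t_i=t_{i+1}$ merely collapsing an interval to a point), and that no saturated generator ever reactivates. The remainder is the routine linear algebra of solving the balance relation on a single affine segment.
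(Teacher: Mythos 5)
Your proposal is correct and is essentially the paper's own argument: the paper's proof parametrizes the primary response by "time" with generator $i$ ramping at rate $1/R_i$ and saturating at time $t_i$, which is exactly your scalar $\lambda$ with response $\min\{\lambda/R_j,\overline{p_{gj}}-p_{gj}\}$, and it likewise identifies $S_i$ as the cumulative generation at the $i$-th saturation instant before solving the balance on the resulting affine piece. Your version merely makes explicit the monotonicity, continuity, and ordering checks (e.g.\ $S_{i+1}-S_i=(t_{i+1}-t_i)\sum_{l>i}1/R_l\geq 0$) that the paper leaves as "it is easy to see."
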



In general, as demonstrated in Figs.~\ref{fig:Example2} and \ref{fig:Example_Flows}, due to power generation limits, power flow on a line may not change monotonically as demand changes in a specific node--as in (\ref{eq:delta_f}). 
Hence,  the maximum change in the power flows cannot be found in a closed form as in (\ref{eq:max_delta_f}). However, one may be able to find an upper bound on the maximum power flow change on a line.

Upper bounds on the maximum power flow changes after a MAD attack can be computed by assuming the worst case initial operating points and also assuming that generators can be arbitrarily assigned to provide extra required generation. 
In particular, an upper bound $\widehat{\Delta f_{ij}}$ for the power flow changes on line $(i,j)$ can be computed as:
\begin{align}
\widehat{\Delta f_{ij}} := &\max_{\vec{p_g},\vec{\Delta p_d},\vec{\Delta p_g}}&&\!\!\!\left|1/x_{ij}(\A_i^+-\A_j^+)(\vec{\Delta p_g}-\vec{\Delta p_d})\right|\label{eq:upperbound1}\\
&\qquad\text{s.t.}&&\vec{1}^T(\vec{p_g}-\vec{p_d})=0,\nonumber\\
&&&\vec{1}^T(\vec{\Delta p_g}-\vec{\Delta p_d})=0,\nonumber\\
&&& -\overline{\Delta p_{dl}}\leq \Delta p_{dl}\leq \overline{\Delta p_{dl}}, \quad 1\leq l\leq n \nonumber\\
&&& \underline{p_g}\leq \vec{p_g}\leq\overline{p_g},\nonumber\\
&&& 0\leq\Delta p_{gl}\leq \overline{p_{gl}}-p_{gl}, \quad  1\leq l\leq n,\nonumber\\
&&& S_{\Delta p_d}\geq 0.\nonumber
\end{align}


Optimization (\ref{eq:upperbound1}) is a  Linear Program (LP) that can be solved efficiently for each line $(i,j)$. Using these upper bounds, the solution to the following OPF problem provides robust operating points for the generators: 
\begin{align}
\label{eq:ec_dis2}
&\min_{\tet,\vec{f},\vec{p_g}}& &\sum_{l=1}^n c_l(p_{gl}),\\
&\qquad\text{s.t.}& &(\ref{eq:p}),(\ref{eqn:flow1}),(\ref{eqn:flow2}), (\ref{eq:pg}),\nonumber\\
&& & |f_{ij}|\leq \overline{f_{ij}}-\widehat{\Delta f_{ij}},\quad\forall(i,j)\in E\nonumber\\
&& &\vec{p}=\vec{p_g}-\vec{p_d}.\nonumber
\end{align}

It is obvious that enforcing the power flows on all the lines, such as $(i,j)$, to be less than $\overline{f_{ij}}-\widehat{\Delta f_{ij}}$ in the OPF problem as in (\ref{eq:ec_dis2}) ensures that none of the lines will be overloaded after a potential MAD attack. However, the solution to (\ref{eq:ec_dis2}) may not provide the optimal robust operating points for the generators. To achieve a better solution, we introduce an iterative algorithm that solves the OPF problem and updates the line capacities to ensure robustness. We will then use the upper bounds $\widehat{\Delta f_{ij}}$s to prove that the algorithm will converge to a local optimal solution.

First, given the operating points $p_{g1},\dots,p_{gn}$ to the OPF problem, the maximum power flow change on line $(i,j)$ (denoted by $\Delta f_{ij}^{\max}$) can be computed by solving the following optimization problem:
\begin{align}
\label{eq:max_delta_f_limit}&\Delta f_{ij}^{\max} = &&\max_{\vec{\Delta p_d}} &&\sgn(f_{ij}) \Big(1/x_{ij} \sum_{l=1}^n -\Delta p_{dl} (a_{il}^+-a_{jl}^+)\\
&&&&&+1/x_{ij}\sum_{l=1}^n f_l(S_{\Delta p_d}) (a_{il}^+-a_{jl}^+)\Big)\nonumber\\
&&&\qquad\text{s.t.}&& -\overline{\Delta p_{dl}}\leq\Delta p_{dl}\leq \overline{\Delta p_{dl}},\quad 1\leq l\leq n\nonumber\\
&&&&& ~S_{\Delta p_d}\geq 0.\nonumber
\end{align}
in which $f_l(.)$s denote piecewise linear functions that determine the extra output of the generators based on the total demand change $S_{\Delta p_d}$. Since we assumed that $p_{g1},\dots, p_{gn}$ are given, functions $f_l(.)$ can be uniquely determined using Lemma~\ref{lem:gen_order}. Notice that $\sgn(f_{ij})$ in the objective of (\ref{eq:max_delta_f_limit}) is to ensure that the maximum changes are in the direction of \emph{increase} in the power flow on line $(i,j)$. Hence, for all lines $\Delta f_{ij}^{\max}\geq0$.

\begin{lemma}\label{lem:max_f_polynomial}
Optimization (\ref{eq:max_delta_f_limit}) can be solved in polynomial time for each $(i,j)\in E$.
\end{lemma}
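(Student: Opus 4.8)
The plan is to exploit the fact that, once the operating point $p_{g1},\dots,p_{gn}$ is fixed, the generator-response functions $f_l(\cdot)$ appearing in (\ref{eq:max_delta_f_limit}) are \emph{piecewise linear} in the scalar $S_{\Delta p_d}$ with only a small number of breakpoints. By Lemma~\ref{lem:gen_order}, these breakpoints are exactly the constants $S_1\leq S_2\leq\dots\leq S_n$ defined there (fixed numbers, since $p_g$ is given), and on each interval $S_k< S_{\Delta p_d}\leq S_{k+1}$ (set $S_0:=0$) every $f_l$ is an \emph{affine} function of $S_{\Delta p_d}$: each saturated generator $l\leq k$ contributes the constant $\overline{p_{gl}}-p_{gl}$, while each unsaturated generator $l>k$ contributes a term proportional to $S_{\Delta p_d}$. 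Since $S_{\Delta p_d}=\sum_{l}\Delta p_{dl}$ is itself a linear functional of the decision vector $\vec{\Delta p_d}$, each $f_l(S_{\Delta p_d})$ is affine in $\vec{\Delta p_d}$ throughout that interval.

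Consequently, I would decompose the problem along these intervals. For each $k\in\{0,1,\dots,n\}$ I would substitute the affine expression for $f_l$ valid on the $k$-th interval into the objective of (\ref{eq:max_delta_f_limit}) and append the two linear inequalities $S_k\leq \sum_{l}\Delta p_{dl}\leq S_{k+1}$ to the existing box constraints $-\overline{\Delta p_{dl}}\leq \Delta p_{dl}\leq\overline{\Delta p_{dl}}$ and $S_{\Delta p_d}\geq 0$. Because the (now affine) objective and all constraints are linear in $\vec{\Delta p_d}$—the constant prefactor $\sgn(f_{ij})/x_{ij}$ does not affect linearity—this restricted problem is a linear program with $n$ variables and $O(n)$ constraints, solvable in polynomial time by, e.g., interior-point methods. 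The box constraints already force $0\leq S_{\Delta p_d}\leq \overline{S_{\Delta p_d}}$, so only the breakpoints lying in this range are relevant; on each interval the original objective coincides with the affine one used in the corresponding LP. Hence the optimum of (\ref{eq:max_delta_f_limit}) equals the largest of these at most $n+1$ LP optima, and solving $n+1$ polynomial-time LPs is again polynomial, which establishes the claim.

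The main subtlety I would emphasize is that the piecewise-linear objective need \emph{not} be convex or concave across breakpoints, since the slopes of the $f_l$ change at each $S_k$; this is exactly why a single LP over the full feasible box does not suffice and the interval decomposition is essential. The decomposition is tractable only because all of the nonlinearity is channeled through the \emph{single} scalar $S_{\Delta p_d}$, so the linear pieces are indexed by a one-dimensional threshold structure and there are merely $n+1$ of them rather than exponentially many. Finally, I would verify the degenerate cases—ties among the $t_i$, which only merge adjacent intervals and reduce their number, and the endpoint $S_{\Delta p_d}=0$—to confirm that they leave the per-interval linearity argument intact.
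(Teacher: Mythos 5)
Your proposal is correct and follows essentially the same route as the paper: order the generators by $t_i$, partition the feasible region by the breakpoints $S_0:=0\leq S_1\leq\dots\leq S_n$ of $S_{\Delta p_d}$, observe that on each interval every $f_l(\cdot)$ is affine in $\vec{\Delta p_d}$, solve one LP per interval, and take the maximum over the $O(n)$ resulting optima. The extra remarks you add (non-convexity across breakpoints, ties among the $t_i$) are sound but not needed beyond what the paper's argument already establishes.
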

\begin{proof}
Without loss of generality, assume that generators are ordered such that $t_1\leq t_{2}\leq\dots\leq t_n$ as defined in Lemma~\ref{lem:gen_order}. It is easy to see that by using Lemma~\ref{lem:gen_order} and defining $S_0:=0$, one can solve (\ref{eq:max_delta_f_limit}) in different linear regions of $f_l(.)$s by considering additional conditions for $S_{\Delta p_d}$ (for $0\leq z< n$):
\begin{eqnarray}\label{eq:condition}
S_z\leq S_{\Delta p_d}< S_{z+1}.
\end{eqnarray}
Under condition (\ref{eq:condition}), $f_{l}(.)$s can be determined as follows:
\begin{equation}
f_l(S_{\Delta p_d}) =
\begin{cases}
\overline{p_l}-p_l& l\leq z,\\
\frac{1/R_{l}\big(S_{\Delta p_d}-\sum_{w=1}^z (\overline{p_w}-p_{w})\big)}{\sum_{w=z+1}^n 1/R_w} & l>z.
\end{cases}
\end{equation}
 Hence, all the $f_{l}(.)$ are either constant or linear functions in (\ref{eq:max_delta_f_limit}) and therefore (\ref{eq:max_delta_f_limit}) can be solved efficiently using LP. Hence, by solving (\ref{eq:max_delta_f_limit}) at most $n$ times (once for every condition (\ref{eq:condition}) for different $z$) $\Delta f_{ij}^{\max}$ can be found in polynomial time.
 \end{proof}
After computing $\Delta f_{ij}^{\max}$ values, one can use them to verify if any of the lines will be overloaded after an attack. If yes, then add additional constraints to the OPF problem to ensure that those lines will not be overloaded. The OPF problem can then be solved with new additional constraints and the process continues until no additional constraints are needed. We call this algorithm \underline{I}teratively \underline{M}ini\underline{M}ize and bo\underline{UN}d \underline{E}conomic dispatch (IMMUNE) Algorithm (summarized in Algorithm~\ref{algorithm:IMMUNE}).

\begin{algorithm}[t]
\caption{\underline{I}teratively \underline{M}ini\underline{M}ize and bo\underline{UN}d \underline{E}conomic dispatch (IMMUNE)}
\label{algorithm:IMMUNE}
\small
\begin{trivlist}
\item\textbf{Input:} $G$
\end{trivlist}
\begin{algorithmic}[1]
\STATE flag = 1
\STATE Define $c_{ij}:=\overline{f_{ij}}$ for all $(i,j)\in E$
\WHILE{flag}
    \STATE Compute OPF problem (\ref{eq:dcopf}) such that $\forall (i,j)\in E:|f_{ij}|\leq c_{ij}$
    \IF{OPF is not feasible}
    \STATE \textbf{return} none
    \ENDIF
    \STATE Compute $\Delta f_{ij}^{\max}$ by solving (\ref{eq:max_delta_f_limit}) for all $(i,j)\in E$
    \STATE flag = 0
    \FOR{$(i,j)\in E$}
    \vspace*{0.15cm}
        \IF{$\overline{f_{ij}}<|f_{ij}|+\Delta f_{ij}^{\max}$}
            \vspace*{0.15cm}
            \STATE $c_{ij}=\overline{f_{ij}}-\Delta f_{ij}^{\max}$
            \STATE flag = 1
        \ENDIF
    \ENDFOR
\ENDWHILE
\STATE \textbf{return} $p_{g1},p_{g2},\dots,p_{gn}$
\end{algorithmic}
\end{algorithm}

\begin{lemma}\label{lem:IMMUNE_converge}
If (\ref{eq:ec_dis2}) is feasible, then the IMMUNE Algorithm converges to a local optimum solution.
\end{lemma}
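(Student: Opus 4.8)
The plan is to establish three things in turn: that the algorithm never reports infeasibility, that the line-capacity vector it maintains converges, and that the point it converges to is a robust operating point that locally minimizes the generation cost.

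First I would pin down the single inequality that drives the whole argument, namely that for every edge and every operating point the operating-point-specific maximum flow change computed by (\ref{eq:max_delta_f_limit}) is dominated by the worst-case bound of (\ref{eq:upperbound1}), i.e.\ $\Delta f_{ij}^{\max}\le \widehat{\Delta f_{ij}}$. This holds because (\ref{eq:upperbound1}) maximizes the same flow-change expression over a strictly larger feasible set: it optimizes over all admissible $\vec{p_g}$ and over arbitrary feasible extra-generation assignments $\vec{\Delta p_g}$, whereas (\ref{eq:max_delta_f_limit}) fixes $\vec{p_g}$ and forces the specific generation response dictated by Lemma~\ref{lem:gen_order}. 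Consequently every capacity the algorithm ever sets satisfies $c_{ij}=\overline{f_{ij}}-\Delta f_{ij}^{\max}\ge \overline{f_{ij}}-\widehat{\Delta f_{ij}}$, so the feasible region of the OPF solved inside the loop always contains the feasible region of (\ref{eq:ec_dis2}). Since (\ref{eq:ec_dis2}) is assumed feasible, the OPF is feasible at every iteration and the algorithm never returns \textbf{none}.

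Next I would prove that each $c_{ij}$ is non-increasing across iterations and hence convergent. The only place a capacity changes is the update step, and I would show every such change is a strict decrease: the flow $|f_{ij}|$ produced by the current OPF respects the previous capacity, so combining $|f_{ij}|\le c_{ij}^{\mathrm{prev}}$ with the update trigger $\overline{f_{ij}}<|f_{ij}|+\Delta f_{ij}^{\max}$ forces the newly computed $\Delta f_{ij}^{\max}$ to exceed the value that had produced $c_{ij}^{\mathrm{prev}}$, whence the new $c_{ij}=\overline{f_{ij}}-\Delta f_{ij}^{\max}$ lies strictly below $c_{ij}^{\mathrm{prev}}$. Each capacity sequence is therefore monotonically non-increasing and, by the bound of the previous paragraph, bounded below by $\overline{f_{ij}}-\widehat{\Delta f_{ij}}$, so it converges. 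Equivalently, since the feasible region only shrinks, the optimal cost is non-decreasing and bounded above by the optimum of (\ref{eq:ec_dis2}), so the cost converges as well. I would then identify the limit as a locally optimal robust operating point: letting $c_{ij}^\star$ be the limiting capacities and $\vec{p_g}^\star$ the corresponding OPF solution, the termination test must hold for every line at the limit, since if some line had $\overline{f_{ij}}<|f_{ij}^\star|+\Delta f_{ij}^{\max}$ the update step would drive $c_{ij}$ strictly below $c_{ij}^\star$, contradicting convergence. Thus $|f_{ij}^\star|+\Delta f_{ij}^{\max}\le\overline{f_{ij}}$ for all $(i,j)$, so no line is overloaded after any MAD attack and $\vec{p_g}^\star$ is robust, while $\vec{p_g}^\star$ minimizes the cost subject to capacities $c_{ij}^\star=\overline{f_{ij}}-\Delta f_{ij}^{\max}(\vec{p_g}^\star)$ that are themselves the exact linearization of the robust flow constraints at $\vec{p_g}^\star$ (through the piecewise-linear response functions of Lemma~\ref{lem:gen_order}), which is precisely the local-optimality condition for the nonconvex robust dispatch problem.

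I expect the delicate step to be this last one: passing from convergence of the capacity (or cost) sequence to the assertion that the limit genuinely satisfies the termination condition and is a true local optimum rather than merely a fixed point of the update map. This requires a continuity/stability argument for how the OPF minimizer and the induced quantities $\Delta f_{ij}^{\max}$ depend on the capacity vector, so that the no-further-decrease property at the limit rules out any residual violation and so that infinitely many vanishing updates cannot accumulate without attaining robustness. Establishing this cleanly is the main obstacle; everything else reduces to the dominance inequality $\Delta f_{ij}^{\max}\le\widehat{\Delta f_{ij}}$ and elementary monotonicity.
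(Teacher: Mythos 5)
Your proof is correct and follows essentially the same route as the paper's: the dominance inequality $\Delta f_{ij}^{\max}\le \widehat{\Delta f_{ij}}$ keeps every $c_{ij}$ above the fixed floor $\overline{f_{ij}}-\widehat{\Delta f_{ij}}$ (so feasibility of (\ref{eq:ec_dis2}) guarantees the OPF never fails), and the strict-decrease-on-update argument makes each $c_{ij}$ monotone non-increasing, hence convergent. The delicate final step you flag—passing from convergence of the capacities to the limit actually satisfying the termination test and being a genuine local optimum—is in fact glossed over by the paper, which simply asserts convergence to a local optimum once monotonicity and boundedness are established; your treatment is, if anything, more careful there.
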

%

Lemma~\ref{lem:IMMUNE_converge} provides a sufficient condition such that the IMMUNE Algorithm converges to a local optimum. However, even if (\ref{eq:ec_dis2}) is not feasible, the system operator can still run the IMMUNE Algorithm to obtain a local optimum solution if the OPF problem remains feasible at each iteration of the algorithm.

We can also provide an upper bound on the number of iterations that IMMUNE algorithm requires to converge. For this reason, the algorithm needs to change discrete changes to the capacities at each iteration.

\begin{lemma}\label{lem:immune_runtime}
If the IMMUNE Algorithm changes $c_{ij}$ at each iteration by a discrete amount such as $c_{ij}=\max\{\lfloor\overline{f_{ij}}-\Delta f_{ij}^{\max}\rfloor,\overline{f_{ij}}-\widehat{\Delta f_{ij}}\}$, then it 
terminates in at most $O(\sum_{(i,j)\in E} \lceil \widehat{\Delta f_{ij}}\rceil)$ iterations.
\end{lemma}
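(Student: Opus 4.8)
The plan is to treat each line capacity $c_{ij}$ as a monovariant that can only decrease, is bounded below, and decreases by a full unit on each update, and then convert the resulting per-line update budget into a bound on the number of WHILE iterations. First I would record two invariants maintained throughout the algorithm. Because the update rule clamps $c_{ij}$ from below through the outer $\max$, we always have $c_{ij}\ge \overline{f_{ij}}-\widehat{\Delta f_{ij}}$, so the total reduction on line $(i,j)$ never exceeds $\widehat{\Delta f_{ij}}$. This uses the fact that $\Delta f_{ij}^{\max}\le\widehat{\Delta f_{ij}}$, which holds because the single fixed operating point and its droop-determined generation response evaluated in (\ref{eq:max_delta_f_limit}) form a feasible point of the maximization (\ref{eq:upperbound1}) that defines $\widehat{\Delta f_{ij}}$. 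The second invariant is that $c_{ij}$ is non-increasing and strictly decreases whenever line $(i,j)$ is updated: the update fires only when $\overline{f_{ij}}<|f_{ij}|+\Delta f_{ij}^{\max}$, while the just-solved OPF satisfies $|f_{ij}|\le c_{ij}$, so $\overline{f_{ij}}-\Delta f_{ij}^{\max}<|f_{ij}|\le c_{ij}$; together with $\overline{f_{ij}}-\widehat{\Delta f_{ij}}\le \overline{f_{ij}}-\Delta f_{ij}^{\max}$ this places both arguments of the $\max$ strictly below the current $c_{ij}$.

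The quantitative heart of the argument is to show each update of a given line decreases its capacity by at least one unit until the clamp is reached. After any update, $c_{ij}$ equals either the integer $\lfloor\overline{f_{ij}}-\Delta f_{ij}^{\max}\rfloor$ or the fixed clamp value $\overline{f_{ij}}-\widehat{\Delta f_{ij}}$. For two successive floor-governed updates of the same line, monotonicity forces the second integer value to be strictly smaller than the first, hence at least one less. Once $c_{ij}$ attains the clamp it can never be updated again, since a further update would require a value strictly below the clamp, contradicting the outer $\max$. Therefore the integer values that line $(i,j)$ passes through all lie in the interval $[\overline{f_{ij}}-\widehat{\Delta f_{ij}},\,\overline{f_{ij}}]$, of length $\widehat{\Delta f_{ij}}$, giving at most $\lceil\widehat{\Delta f_{ij}}\rceil+1$ floor updates, plus at most one terminal clamp update, so line $(i,j)$ is updated $O(\lceil\widehat{\Delta f_{ij}}\rceil)$ times.

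Finally I would convert the per-line update budget into the iteration bound. Any iteration of the WHILE loop that does not terminate sets flag $=1$, and flag is set to $1$ only inside the IF branch that updates some capacity; hence every non-terminating pass performs at least one line update, and the loop terminates on the single pass with no update. Consequently the number of iterations is at most one more than the total number of updates, and summing the per-line budgets gives $\sum_{(i,j)\in E}O(\lceil\widehat{\Delta f_{ij}}\rceil)=O\!\big(\sum_{(i,j)\in E}\lceil\widehat{\Delta f_{ij}}\rceil\big)$.

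The step I expect to be the main obstacle is the \emph{decrease-by-at-least-one} claim: it is tempting to assert directly, but it genuinely depends on the flooring forcing integrality in tandem with the monotonicity of $c_{ij}$, and on separating the two update regimes cleanly. The subtle case is when the clamp $\overline{f_{ij}}-\widehat{\Delta f_{ij}}$ (a real number, not an integer) rather than the floor term determines the new capacity; keeping the integer-valued updates (each costing a full unit of the reduction budget) distinct from the single terminal clamp update is exactly what makes the counting rigorous and avoids an off-by-a-constant error that the $O(\cdot)$ absorbs.
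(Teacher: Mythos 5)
Your proposal is correct and follows essentially the same route as the paper's proof: every non-terminating iteration must update at least one capacity, each $c_{ij}$ is non-increasing and bounded below by $\overline{f_{ij}}-\widehat{\Delta f_{ij}}$, and the discretized update rule forces each update to consume at least one unit of that line's budget, giving the stated iteration count. You simply make explicit the details the paper leaves implicit (the strict decrease at each update, the floor-forces-unit-decrement step, and the separate handling of the terminal clamp), which strengthens rather than changes the argument.
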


Following a similar idea, one can increase the running time of the IMMUNE algorithm by applying more aggressive update rules for the capacities in line 11 of the algorithm. For example, line 11 can be replaced by $c_{ij}=0.9(\overline{f_{ij}}-\Delta f_{ij}^{\max})$ or $c_{ij}=0.95(\overline{f_{ij}}-\Delta f_{ij}^{\max})$. We call these variations of the IMMUNE Algorithm, IMMUNE-0.9 and IMMUNE-0.95. In Section~\ref{subsec:prime_num}, we numerically evaluate and compare the performance of these algorithms and demonstrate that more aggressive update rules result in faster convergence. 

One favorable property of the IMMUNE Algorithm is that it can be easily parallelized. This parallelization can be used to simultaneously compute $\Delta f_{ij}^{\max}$ for all the lines at each iteration  in order to expedite the algorithm.

If the OPF problem becomes infeasible in any iteration of the IMMUNE Algorithm, there are two ways to circumvent the issue: (i) By considering higher temporary limits for the lines (e.g., $1.1\overline{f_{ij}}$) which is a common practice in power systems operation, but the operator needs to ensure that line overloads can be cleared during the secondary control, or (ii) by returning to the unit commitment problem and change the list of committed generators to make sure (\ref{eq:ec_dis2}) is feasible. We will address the first approach in the next section in detail. However, the second approach is beyond the scope of this paper and is part of our future work.

\section{Power Flows: Secondary Control}\label{sec:secondary}

\begin{figure}[t]
\centering
\begin{subfigure}{0.20\textwidth}
\centering
\includegraphics[scale=0.43]{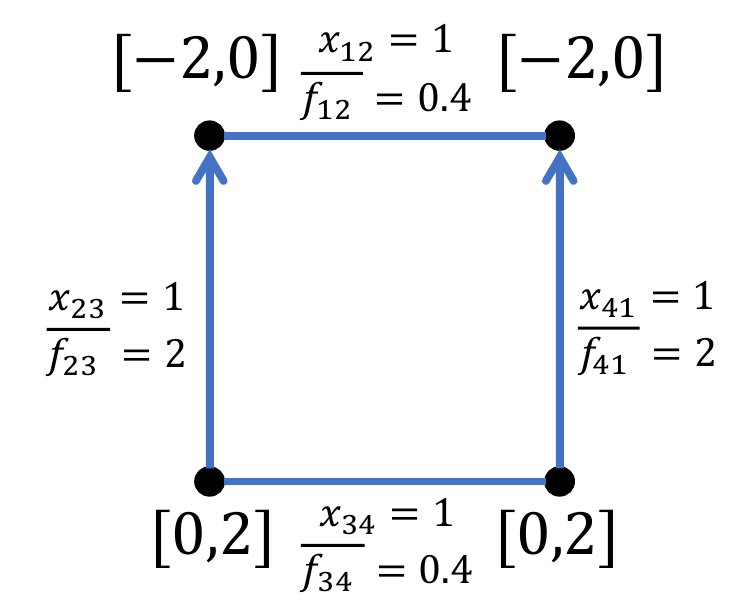}
\caption{}
\end{subfigure}
\begin{subfigure}{0.13\textwidth}
\centering
\includegraphics[scale=0.43]{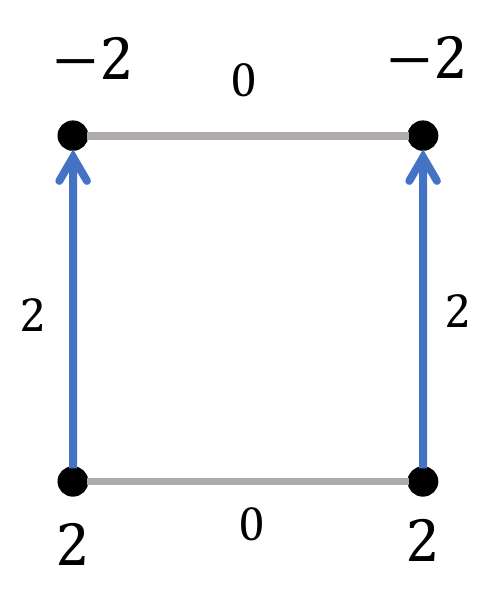}
\caption{}
\label{subfig:Sec_Example_2}
\end{subfigure}
\begin{subfigure}{0.13\textwidth}
\centering
\includegraphics[scale=0.43]{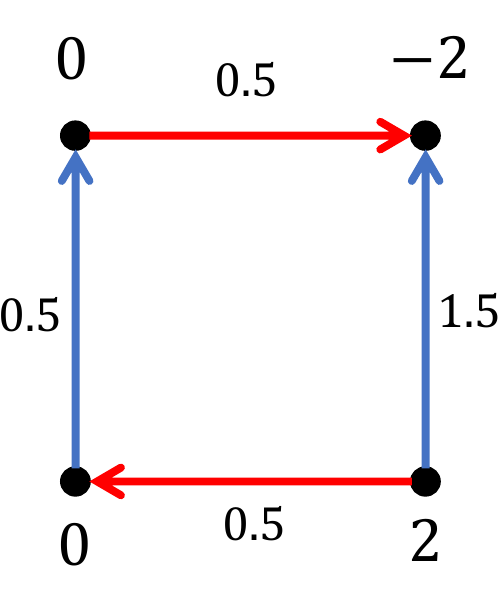}
\caption{}
\label{subfig:Sec_Example_3}
\end{subfigure}
\caption{Complexity of secondary controller problem. (a) Secondary controller problem setting, (b) an attack that maximizes the demand, and (c) an attack that minimizes the demand at one node and maximizes the demand at another node.}
\label{fig:Sec_Example}
\end{figure}

 In cases that primary control cannot prevent line overloads, 
 the system operator have to clear these overloads during the secondary control instead. 
 In such cases, the operator needs to make sure in advance that after the primary control's response to a MAD attack, there are operating points for the generators such that the demand can be supplied with no line overloads (i.e., secondary controller can clear the overloads). 
 Assuming that the maximum and minimum reachable demand at node $i$ by an adversary is $\underline{p_{di}}$ and $\overline{p_{di}}$, respectively, this problem can be defined as the \emph{secondary controller problem}:

\vspace*{0.2cm}
\emph{Secondary controller problem:} For any $p_{d1},p_{d2},\dots,p_{dn}$ that $\forall 1\leq i\leq n:\underline{p_{di}}\leq p_{di}\leq \overline{p_{di}}$, are there operating points $p_{g1},\dots,p_{gn}$ for the generators such that $\forall 1\leq i\leq n:\underline{p_{gi}}\leq p_{gi}\leq \overline{p_{gi}}$, $\vec{1}^T(\vec{p_g}-\vec{p_d})=0$, and no lines are overloaded?
\begin{definition}
A grid is called \emph{secondary controllable} if the answer to the secondary controller problem is yes.
\end{definition}
Notice that \emph{operating cost of the generators are not important during the secondary control,} since the secondary controller activates only after a potential attack and the operator needs to bring back the grid to its normal state as soon as possible at any cost. 
Fig.~\ref{fig:Sec_Example} provides an example of the secondary controller problem. As can be seen in Fig.~\ref{subfig:Sec_Example_2}, when the demands are all equal to their maximum level after a MAD attack, the demand can be supplied by generators with no line overloads. However, as presented in Fig.~\ref{subfig:Sec_Example_3}, when the demand is increased to it maximum level at one node and decreased to its minimum at another one, there is no possible way to supply the demand such that no lines are overloaded. This example clearly evinces that the secondary controller problem is not intuitive.

In the following subsections, we study the secondary controller problem in detail and provide efficient algorithms to verify the secondary controllability of a power system. 
\subsection{Maxmin Formulation}\label{subsec:maxmin}
 One way of verifying the secondary controllability of a power system is by exploiting \emph{linear bilevel programs}~\cite{falk1973linear,bard1998practical}. The secondary controller problem can be written in the form of a max-min linear problem which is a special form of \emph{linear bilevel programs} as follows:
 \begin{align}\label{eq:maxmin}
&\max_{\vec{p_d}}\min_{\vec{p_g},\vec{q},\vec{f},\tet}&&\vec{1}^T\vec{q}\\
&\quad\quad\quad\text{s.t.}&& (\ref{eq:p}),(\ref{eqn:flow1}),(\ref{eqn:flow2}), (\ref{eq:f}), (\ref{eq:pg}),\nonumber\\
&&&\vec{p}=\vec{p_{g}}-\vec{p_{d}}+\vec{q}, \nonumber\\
&&& q_i\geq 0, \quad 1\leq i\leq n\nonumber\\
&&& \underline{p_{di}}\leq p_{di}\leq \overline{p_{di}}, \quad 1\leq i\leq n.\nonumber
 \end{align}
In optimization problem (\ref{eq:maxmin}), vector $\vec{p_d}$ should be selected such that for the best possible selection of vector $\vec{p_g}$ and positive auxiliary vector $\vec{q}$, the objective value is maximized. Following proposition relates the solution of (\ref{eq:maxmin}) to the secondary controller problem.

\begin{proposition}\label{prop:maxmin}
The optimal solution of (\ref{eq:maxmin}) is 0 if, and only if, the grid is secondary controllable.
\end{proposition}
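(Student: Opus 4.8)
The plan is to read the auxiliary vector $\vec{q}\geq 0$ as a nodal slack (virtual generation, or equivalently load shedding) whose total $\vec{1}^T\vec{q}$ measures exactly how far a given demand profile is from being serviceable without overloads, and then to translate secondary controllability into the statement that this measure vanishes for every adversarial demand. The single elementary fact driving everything is that $\vec{q}\geq 0$ forces $\vec{1}^T\vec{q}\geq 0$, so for each fixed $\vec{p_d}$ the inner minimum is nonnegative, and it equals $0$ precisely when the inner feasible set contains a point with $\vec{q}=\vec{0}$ (which, by the power-balance constraint with $\vec{q}=\vec 0$, is exactly a valid secondary-control dispatch for that $\vec{p_d}$).

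First I would check that the inner minimization in (\ref{eq:maxmin}) is feasible and attains its minimum for every $\vec{p_d}$ in the box $[\underline{p_d},\overline{p_d}]$, so the outer maximization is well posed. After eliminating $\tet$ and $\vec{f}$ through $\vec{f}=\B\vec{p}$, the flow limits become $|(\B\vec{p})_e|\leq\overline{f_e}$, and the inner problem is a linear program over a closed polyhedron whose objective $\vec{1}^T\vec{q}$ is bounded below by $0$; attainment then follows from closedness together with compactness of the generation and demand boxes. Feasibility is obtained by using $\vec{q}$ as genuine slack: one selects a balanced net-injection inside the flow polytope, e.g. $\vec{p}=\vec{0}$ with zero flows, and absorbs the remaining mismatch with a nonnegative $\vec{q}$, leaning on the implicit standing assumption that the committed generation can balance the admissible demand range.

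Next I would prove the two directions. For the ``if'' direction, assume the grid is secondary controllable and fix an admissible $\vec{p_d}$; by definition there is $\vec{p_g}\in[\underline{p_g},\overline{p_g}]$ with $\vec{1}^T(\vec{p_g}-\vec{p_d})=0$ and induced flows $\vec{f}=\B(\vec{p_g}-\vec{p_d})$ obeying $|f_e|\leq\overline{f_e}$. Taking $\vec{q}=\vec{0}$ gives an inner-feasible point of objective value $0$, and since $\vec{1}^T\vec{q}\geq 0$ always, the inner minimum is $0$ for this $\vec{p_d}$; quantifying over all admissible $\vec{p_d}$ makes the outer maximum $0$. For the ``only if'' direction, suppose the optimal value of (\ref{eq:maxmin}) is $0$. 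Because each inner minimum is nonnegative and their maximum is $0$, every inner minimum must equal $0$; by attainment each is achieved at some $(\vec{p_g},\vec{q},\vec{f},\tet)$ with $\vec{1}^T\vec{q}=0$, and $\vec{q}\geq 0$ forces $\vec{q}=\vec{0}$. That witness is a within-limits dispatch meeting demand with feasible flows, so the secondary controller problem answers ``yes'' for that $\vec{p_d}$; since this holds for all admissible $\vec{p_d}$, the grid is secondary controllable.

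The main obstacle I expect is not the logical equivalence, which is essentially bookkeeping around the sign constraint $\vec{q}\geq 0$, but the well-posedness of the inner program: one must guarantee that it is feasible for \emph{every} demand in the box and that its optimum is attained, so that ``infimum $=0$'' genuinely yields a $\vec{q}=\vec{0}$ witness rather than only an approximating sequence. This is exactly where the slack variable $\vec{q}$ and the compactness of the generation/demand boxes do the work, and where the argument must rely on the assumption that committed generation can balance the demand range; I would state that hypothesis explicitly and verify the $\vec{p}=\vec{0}$ construction respects both $\underline{p_g}\leq\vec{p_g}\leq\overline{p_g}$ and $\vec{q}\geq 0$.
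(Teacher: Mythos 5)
Your proof is correct and follows essentially the same route as the paper: both directions hinge on reading $\vec{q}\geq 0$ as slack, so that the inner minimum is nonnegative and equals zero exactly when a feasible dispatch with $\vec{q}=\vec{0}$ exists, which is the definition of secondary controllability. The only difference is the extra care you take over feasibility and attainment of the inner LP (which the paper silently assumes); note only that your explicit $\vec{p}=\vec{0}$ feasibility construction requires $\vec{p_g}\leq\vec{p_d}$ componentwise and can fail at a generator bus with $\underline{p_{gi}}>p_{di}$, so that step genuinely rests on the standing assumption you name rather than on the construction itself.
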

\begin{proof}
If the optimal solution to (\ref{eq:maxmin}) is 0, then for any demand vector $\vec{p_d}$, vector of generation values $\vec{p_g}$ can be selected such that $\vec{1}^T(\vec{p_g}-\vec{p_d})=0$ and no lines are overloaded. Hence, the grid is secondary controllable. Now if the grid is secondary controllable, then for all demand vectors $\vec{p_d}$, there exists a vector of generation $\vec{p_g}$ such that $\vec{1}^T(\vec{p_d}-\vec{p_d})=0$ and no lines are overloaded. Hence, the auxiliary vector $\vec{q}$ can be selected to be equal to 0 by the minimization part of (\ref{eq:maxmin}) for any vector $\vec{p_d}$. Therefore, the optimal solution to (\ref{eq:maxmin}) would be 0.
\end{proof}

Proposition~\ref{prop:maxmin} clearly demonstrates that solving (\ref{eq:maxmin}) can determine secondary controllability of a power system. Moreover, when optimal solution of (\ref{eq:maxmin}) is greater than 0, the nonzero entries of the optimal vector $\vec{q}$ can reveal the minimum extra generation required to ensure secondary controllability of the system.

Despite many advantages of formulation (\ref{eq:maxmin}), the max-min linear program is nonconvex~\cite{bialas1984two} and proved to be NP-hard~\cite{hansen1992new}. 
Therefore existing efficient algorithms for solving (\ref{eq:maxmin}) only obtain local optimal solutions~\cite{bard1998practical}. However, a local optimal solution of (\ref{eq:maxmin}) with value 0 does not guarantee the secondary controllability of the system since the optimal solution may not be zero.

One way of solving (\ref{eq:maxmin}) optimally, albeit in exponential running time, is through brute force search. Following lemma demonstrates that to solve the secondary controller problem, one needs to check only the extreme demand points due to convexity of the space of all possible demand values and linearity of power flow equations.

\begin{lemma}\label{lem:extreme}
The grid is secondary controllable, if and only if for all $p_{d1},\dots,p_{dn}$ such that  $p_{di}\in\{\overline{p_{di}},\underline{p_{di}}\}$ there exist operating points $p_{g1},\dots,p_{gn}$ for the generators such that $\forall 1\leq i\leq n:\underline{p_{gi}}\leq p_{gi}\leq \overline{p_{gi}}$, $\vec{1}^T(\vec{p_g}-\vec{p_d})=0$, and no lines are overloaded.
\end{lemma}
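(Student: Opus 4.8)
The plan is to prove the two implications separately, exploiting the linearity of the DC power flow map and the convexity of the feasible region. The forward (necessity) direction is immediate: if the grid is secondary controllable, then by definition a feasible generation vector exists for \emph{every} demand vector in the box $P_d:=\{\vec{p_d}: \underline{p_{di}}\leq p_{di}\leq\overline{p_{di}},\ 1\leq i\leq n\}$, and in particular for each of its extreme points, which are exactly the vectors with $p_{di}\in\{\underline{p_{di}},\overline{p_{di}}\}$. Hence the stated condition at the extreme points holds as a special case.

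For the sufficiency direction, I would first note that $P_d$ is a box (a hyperrectangle) whose vertices are precisely the $2^n$ extreme demand points in the statement, so every $\vec{p_d}\in P_d$ can be written as a convex combination $\vec{p_d}=\sum_k \lambda_k \vec{p_d}^{(k)}$ of these vertices with $\lambda_k\geq0$ and $\sum_k\lambda_k=1$. By hypothesis, each vertex $\vec{p_d}^{(k)}$ admits a feasible generation $\vec{p_g}^{(k)}$ satisfying $\underline{p_g}\leq\vec{p_g}^{(k)}\leq\overline{p_g}$, the balance $\vec{1}^T(\vec{p_g}^{(k)}-\vec{p_d}^{(k)})=0$, and the line limits $|f^{(k)}_{ij}|\leq\overline{f_{ij}}$ for all $(i,j)\in E$, where $\vec{f}^{(k)}=\B(\vec{p_g}^{(k)}-\vec{p_d}^{(k)})$.

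The key step is to construct $\vec{p_g}:=\sum_k\lambda_k\vec{p_g}^{(k)}$ for the given demand $\vec{p_d}$ and verify feasibility, checking that each constraint family is preserved under convex combinations. The generation box in (\ref{eq:pg}) is convex, so $\vec{p_g}$ lies in it. Power balance is preserved because $\vec{1}^T(\vec{p_g}-\vec{p_d})=\sum_k\lambda_k\,\vec{1}^T(\vec{p_g}^{(k)}-\vec{p_d}^{(k)})=0$. Finally, since $\vec{f}=\B\vec{p}$ is linear in $\vec{p}$, the resulting flow equals $\sum_k\lambda_k\vec{f}^{(k)}$, so by the triangle inequality $|f_{ij}|\leq\sum_k\lambda_k|f^{(k)}_{ij}|\leq\sum_k\lambda_k\overline{f_{ij}}=\overline{f_{ij}}$ for every line, giving (\ref{eq:f}). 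Thus $\vec{p_g}$ is a valid operating point for the arbitrary $\vec{p_d}$, and the grid is secondary controllable.

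I do not expect a genuine obstacle here; the entire argument rests on the linearity of the flow map $\vec{p}\mapsto\B\vec{p}$ together with the convexity of the three constraint sets (a box, an affine hyperplane, and a polytope of flow bounds). The only point needing mild care is confirming that the single convex combination $\vec{p_g}$ satisfies all three constraint families \emph{simultaneously}, which it does precisely because each family is convex and each vertex generation $\vec{p_g}^{(k)}$ already lies in all three.
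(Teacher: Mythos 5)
Your proof is correct and follows essentially the same route as the paper's: both directions rest on writing an arbitrary demand vector as a convex combination of the box's vertices, taking the matching convex combination of the vertex generation vectors, and using linearity of $\vec{f}=\B\vec{p}$ together with convexity of the generation box, the balance hyperplane, and the flow limits. The only cosmetic difference is that you bound $|f_{ij}|$ via the triangle inequality whereas the paper bounds $\B(\vec{p_g}-\vec{p_d})$ from above and below separately; these are equivalent.
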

One the other hand, for a given demand vector $\vec{p_d}$, it can be verified in polynomial time whether there exist operating points for the generators that satisfy the secondary controller problem by solving the minimization part of (\ref{eq:maxmin}) using LP:
\begin{align}\label{eq:minPg}
&\min_{\vec{p_g},\vec{q},\vec{f},\tet}&&\vec{1}^T\vec{q}\\
&\qquad\text{s.t.}& &(\ref{eq:p}),(\ref{eqn:flow1}),(\ref{eqn:flow2}), (\ref{eq:f}), (\ref{eq:pg}),\nonumber\\
&&&\vec{p}=\vec{p_g}-\vec{p_d}+\vec{q}\nonumber\\
&&& q_i\geq 0, \quad 1\leq i\leq n.\nonumber
\end{align}

If the optimum solution to (\ref{eq:minPg}) is not 0, then the optimal vector $\vec{q}$ can be used by the operator to make more generators online for controllability of the grid. Hence by solving (\ref{eq:minPg}) for all extreme demand vectors, one can verify secondary controllability of a system in \emph{exponential running time} and also find how to make it controllable--if it is not--based on obtained vectors $\vec{q}$.

By focusing only on nodes with the largest demands, one can approximately verify if for a subset of  extreme points there exist operating points for the generators satisfying the secondary controller problem. In general, however, such approach may not be able to guarantee the secondary controllability of a grid. Hence, in the next subsection, we provide \emph{sufficient conditions} to ensure secondary controllability of a grid in polynomial time. 

\subsection{Predetermined Secondary Controllers}\label{subsec:secondary_predetermined}

Despite the difficulty in exact determination of secondary controllability of a grid, in this subsection, we introduce and exploit suboptimal predetermined controllers to verify controllability of a grid \emph{with no false positives} (i.e., presented methods cannot determine \emph{uncontrollability} of a system).

In order to verify secondary controllability of the grid, one can find \emph{the best} predetermined way to set the generation values given a demand vector $\vec{p_d}$ such that the maximum power flows over all demand vectors is minimized. In particular, we define the $\vec{\beta}$-determined controller as follows.

\begin{definition}[$\vec{\beta}$-determined controller] \label{def:beta_determined} For any demand vector $\vec{p_d}$, set $\vec{p_g}=(\sum_{i=1}^n p_{di})\times \vec{\beta}$, for a vector $\vec{\beta}$ satisfying: 
\begin{itemize}
\item[(i)] $\vec{\beta}\geq 0$, \quad (ii) $\vec{1}^T\vec{\beta}=1$, \quad (iii) $(\sum_{i=1}^n \overline{p_{di}})\times \vec{\beta}\leq \overline{p_g}$,
\item[(iv)] $(\sum_{i=1}^n \underline{p_{di}})\times \vec{\beta}\geq \underline{p_g}$.
\end{itemize}
\end{definition}
\begin{definition}\label{def:reliable}
A controller is called \emph{reliable}, if for all feasible demand vectors $\vec{p_d}$, it provides a vector of operating points for the generators like $\vec{p_g}$ such that $|\vec{f}|=|\B(\vec{p_g}-\vec{p_d})|\leq \overline{f}$.
\end{definition}


\begin{proposition}\label{prop:second_approx_beta}
If there exists a vector $\vec{\beta}$ such that the $\vec{\beta}$-determined controller is reliable, then the grid is secondary controllable.
\end{proposition}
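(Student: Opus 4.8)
The plan is to give a direct constructive argument. Secondary controllability only asks that \emph{some} feasible generation vector exist for \emph{every} feasible demand vector, and a reliable $\vec{\beta}$-determined controller already hands us an explicit candidate generation vector. So it suffices to fix an arbitrary demand vector $\vec{p_d}$ with $\underline{p_{di}}\leq p_{di}\leq\overline{p_{di}}$ for all $i$, set $\vec{p_g}=(\sum_{i=1}^n p_{di})\vec{\beta}$ as prescribed by Definition~\ref{def:beta_determined}, and verify that this $\vec{p_g}$ meets each of the three requirements in the secondary controller problem, one at a time.

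First I would check the power-balance constraint. Writing $S:=\sum_{i=1}^n p_{di}$, property (ii) of Definition~\ref{def:beta_determined} gives $\vec{1}^T\vec{p_g}=S\,\vec{1}^T\vec{\beta}=S=\vec{1}^T\vec{p_d}$, so $\vec{1}^T(\vec{p_g}-\vec{p_d})=0$ holds automatically. Next I would verify the generation limits $\underline{p_g}\leq\vec{p_g}\leq\overline{p_g}$. The key observation is that feasibility of $\vec{p_d}$ sandwiches the total demand as $\sum_{i=1}^n\underline{p_{di}}\leq S\leq\sum_{i=1}^n\overline{p_{di}}$. Since $\vec{\beta}\geq 0$ by property (i), multiplying these inequalities entrywise by $\beta_i$ preserves their direction, yielding $(\sum_{j=1}^n\underline{p_{dj}})\beta_i\leq S\beta_i\leq(\sum_{j=1}^n\overline{p_{dj}})\beta_i$ for each $i$; chaining this with properties (iii) and (iv) gives $\underline{p_{gi}}\leq p_{gi}\leq\overline{p_{gi}}$.

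Finally, the no-overload condition $|\vec{f}|=|\B(\vec{p_g}-\vec{p_d})|\leq\overline{f}$ is immediate: by hypothesis the $\vec{\beta}$-determined controller is reliable, and $\vec{p_d}$ was chosen as an arbitrary feasible demand vector, so this is exactly what Definition~\ref{def:reliable} asserts. Having produced, for every feasible $\vec{p_d}$, a generation vector meeting all three conditions, the answer to the secondary controller problem is yes, i.e., the grid is secondary controllable.

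I do not expect any genuine obstacle here; the argument is a routine verification that the $\vec{\beta}$-determined controller's candidate is feasible. The only place deserving care is the generation-limit step, where the sign bookkeeping — multiplying the demand-sum inequalities through by $\beta_i$ without reversing them — relies crucially on the nonnegativity $\vec{\beta}\geq 0$ from property (i). Everything else is bookkeeping against the four defining properties of $\vec{\beta}$ and the reliability hypothesis.
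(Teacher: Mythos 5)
Your proof is correct and follows essentially the same route as the paper's: fix an arbitrary feasible demand vector, plug in the $\vec{\beta}$-determined generation vector, and check the secondary controller conditions directly from Definitions~\ref{def:beta_determined} and~\ref{def:reliable}. The only difference is that you explicitly verify the generation limits via properties (i), (iii), and (iv), a step the paper's proof leaves implicit, so your write-up is if anything slightly more complete.
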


For a vector $\vec{\beta}$ satisfying conditions (i-iv) in Definition~\ref{def:beta_determined}, define vectors $\vec{w_i}^{(\beta)}:=-\vec{e_i}+ \vec{\beta}$ for $1\leq i\leq n$  (as in Section~\ref{subsec:primary_nolimit}). The following lemma proves that maximum flow on the lines over all feasible demand vectors, given a $\vec{\beta}$-determined controller, can deterministically be computed.
\begin{lemma}\label{lem:maxflow_beta}
Given a $\vec{\beta}$-determined controller, the maximum power flow on each line $e_k$ over all possible demand vectors is:
\begin{equation}\label{eq:maxflow_beta}
\max_{\underline{p_d}\leq\vec{p_d}\leq \overline{p_d}} |f_k| = \left|\sum_{i=1}^n \frac{(\overline{p_{di}}+\underline{p_{di}})}{2} \B_k\vec{w_i}^{(\beta)}\right| + \sum_{i=1}^n \frac{(\overline{p_{di}}-\underline{p_{di}})}{2} |\B_k\vec{w_i}^{(\beta)}|.
\end{equation}
\end{lemma}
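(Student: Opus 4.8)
The plan is to reduce the maximization to a linear objective over a box and then apply the elementary fact that the image of a box under a linear functional is an interval. First I would substitute the definition of the $\vec{\beta}$-determined controller, $\vec{p_g}=(\sum_{i=1}^n p_{di})\vec{\beta}$, into the flow expression $f_k=\B_k(\vec{p_g}-\vec{p_d})$. Writing $\sum_{i=1}^n p_{di}=\vec{1}^T\vec{p_d}$ and $\vec{p_d}=\sum_{i=1}^n p_{di}\vec{e_i}$, the supply-minus-demand vector collapses to
\[
\vec{p_g}-\vec{p_d}=\sum_{i=1}^n p_{di}(\vec{\beta}-\vec{e_i})=\sum_{i=1}^n p_{di}\,\vec{w_i}^{(\beta)},
\]
so that $f_k=\sum_{i=1}^n p_{di}\,(\B_k\vec{w_i}^{(\beta)})$ is a linear function of the demand vector with scalar coefficients $c_i:=\B_k\vec{w_i}^{(\beta)}$. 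I would also note here that conditions (iii) and (iv) of Definition~\ref{def:beta_determined} guarantee $\underline{p_g}\leq\vec{p_g}\leq\overline{p_g}$ for every demand vector in the box, so the maximization really does range over the entire set $\underline{p_d}\leq\vec{p_d}\leq\overline{p_d}$.

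Next I would pass to the midpoint/radius parametrization of the feasible box. Setting $m_i:=(\overline{p_{di}}+\underline{p_{di}})/2$ and $r_i:=(\overline{p_{di}}-\underline{p_{di}})/2$, the constraint $\underline{p_{di}}\leq p_{di}\leq\overline{p_{di}}$ becomes $p_{di}=m_i+t_i$ with $|t_i|\leq r_i$. Then $f_k=\sum_i c_i m_i+\sum_i c_i t_i$, where $M:=\sum_i c_i m_i$ is a fixed center and $\sum_i c_i t_i$ is the only variable part. Because the $t_i$ vary independently over $[-r_i,r_i]$, the variable part ranges over the symmetric interval $[-\rho,\rho]$ with $\rho:=\sum_i r_i|c_i|$, the extremes being attained by the choice $t_i=r_i\,\sgn(c_i)$.

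Finally I would observe that the reachable set of $f_k$ is exactly $[M-\rho,M+\rho]$, whence $\max_{\vec{p_d}}|f_k|=|M|+\rho$; substituting back $M$ and $\rho$ gives precisely the claimed identity~(\ref{eq:maxflow_beta}). I do not anticipate a genuine obstacle: the only step requiring a moment's care is the closing identity $\max_{x\in[M-\rho,M+\rho]}|x|=|M|+\rho$, which holds because the interval is symmetric about $M$, so its farthest point from the origin is the endpoint sharing the sign of $M$. The remaining work is the bookkeeping already indicated—confirming feasibility of the box and the independence of the coordinates—so the proof should be short and self-contained.
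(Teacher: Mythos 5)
Your proposal is correct and follows essentially the same route as the paper's proof: express $f_k=\sum_i p_{di}\,\B_k\vec{w_i}^{(\beta)}$ as a linear functional of the demand, split each $p_{di}$ into its midpoint $(\overline{p_{di}}+\underline{p_{di}})/2$ plus a deviation of magnitude at most $(\overline{p_{di}}-\underline{p_{di}})/2$, and maximize the absolute value by aligning the deviation terms with the sign of the fixed part. Your phrasing via the interval image $[M-\rho,M+\rho]$ is a slightly cleaner packaging of the same argument, and your explicit check that conditions (iii)--(iv) keep $\vec{p_g}$ feasible is a small but welcome addition.
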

The main question is now whether there exists a vector $\vec{\beta}$ such that the maximum power flows as determined in (\ref{eq:maxflow_beta}) are less than their capacities? We prove that one can examine this efficiently and in polynomial time by solving the following optimization:
\begin{align}\label{eq:optimalBeta}
&\min_{\eta,\vec{\beta},\vec{f}}&&\eta\\
&\quad\text{s.t.}&& \text{(i-iv) in Definition~\ref{def:beta_determined}},\nonumber\\
&&& \vec{f} = |\B\W^{(\beta)}(\overline{p_d}+\underline{p_d})/2| + |\B\W^{(\beta)}|(\overline{p_d}-\underline{p_d})/2,\nonumber\\
&&& \vec{f}\leq \eta \overline{f},\nonumber
\end{align}

in which matrix $\W^{(\beta)}:= [\vec{w_1}^{(\beta)},\dots,\vec{w_n}^{(\beta)}]$. The following proposition demonstrates that (\ref{eq:optimalBeta}) can be solved using LP in polynomial time. Moreover, it indicates that the optimal solution to (\ref{eq:optimalBeta}) can provide the best vector $\vec{\beta}$ for deterministically controlling the grid and its optimal value demonstrates if the corresponding $\vec{\beta}$-determined controller is reliable. 


\begin{proposition}\label{prop:LPBeta}
Optimization (\ref{eq:optimalBeta}) can be solved using LP. Moreover, if the optimal value $\eta^*$ to (\ref{eq:optimalBeta}) is less than or equal to 1, then the $\vec{\beta}^*$-determined controller obtained from corresponding solution is reliable, and therefore the grid is secondary controllable.
\end{proposition}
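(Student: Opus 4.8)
The plan is to prove the two claims in turn: first that (\ref{eq:optimalBeta}) reduces to a linear program, and then that $\eta^*\le 1$ certifies reliability and hence secondary controllability.

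For the LP reduction, the only obstruction to linearity is the entrywise absolute values appearing in the expression for $\vec{f}$. The key observation is that every entry of $\B\W^{(\beta)}$ is \emph{affine} in the decision vector $\vec{\beta}$: since $\vec{w_i}^{(\beta)}=-\vec{e_i}+\vec{\beta}$, we have $\B_k\vec{w_i}^{(\beta)}=-b_{ki}+\B_k\vec{\beta}$, where $b_{ki}$ is the fixed $(k,i)$ entry of $\B$. Hence both $g_k(\vec{\beta}):=\B_k\W^{(\beta)}(\overline{p_d}+\underline{p_d})/2$ and each $\B_k\vec{w_i}^{(\beta)}$ are affine in $\vec{\beta}$. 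I would then introduce epigraph variables: $u_k$ with $u_k\ge g_k(\vec{\beta})$ and $u_k\ge -g_k(\vec{\beta})$ to capture $|g_k(\vec{\beta})|$, and $s_{ki}$ with $s_{ki}\ge \B_k\vec{w_i}^{(\beta)}$ and $s_{ki}\ge -\B_k\vec{w_i}^{(\beta)}$ to capture $|\B_k\vec{w_i}^{(\beta)}|$. Replacing the defining equality for $f_k$ by the single linear constraint $u_k+\sum_{i=1}^n s_{ki}(\overline{p_{di}}-\underline{p_{di}})/2\le \eta\,\overline{f_k}$ (together with conditions (i--iv), which are already linear in $\vec{\beta}$) yields a genuine LP in the variables $\eta$, $\vec{\beta}$, and the auxiliaries $u_k,s_{ki}$.

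The main point to verify is that this relaxation is \emph{tight}, i.e.\ that its optimum equals the true optimum of (\ref{eq:optimalBeta}). This hinges on a monotonicity argument: because $\overline{p_{di}}\ge\underline{p_{di}}$, all coefficients $(\overline{p_{di}}-\underline{p_{di}})/2$ are nonnegative, and the auxiliary variables $u_k,s_{ki}$ enter only on the left-hand side of the constraint bounding $\eta$. Since the objective minimizes $\eta$, at any optimal solution each $u_k$ and $s_{ki}$ is driven down to the smallest value its lower-bound constraints permit, namely $u_k=|g_k(\vec{\beta})|$ and $s_{ki}=|\B_k\vec{w_i}^{(\beta)}|$. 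Substituting these back reproduces exactly the formula for $f_k$ in (\ref{eq:optimalBeta}), so the LP optimum coincides with $\eta^*$. I expect this tightness step---rather than the mechanical introduction of auxiliary variables---to be the only delicate point, and it is precisely where the sign structure of the data is essential.

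For the second claim, I would invoke Lemma~\ref{lem:maxflow_beta}, which identifies the right-hand side defining $f_k$ as exactly $\max_{\underline{p_d}\le\vec{p_d}\le\overline{p_d}}|f_k|$, the largest-magnitude flow on line $e_k$ achievable over all feasible demand vectors under the $\vec{\beta}^*$-determined controller. If $\eta^*\le 1$, then the feasibility constraint gives $\max_{\vec{p_d}}|f_k|=f_k\le\eta^*\overline{f_k}\le\overline{f_k}$ for every line $e_k$; since $\vec{\beta}^*$ also satisfies (i--iv), the controller is well defined, and by Definition~\ref{def:reliable} it is reliable. Finally, Proposition~\ref{prop:second_approx_beta} converts the existence of a reliable $\vec{\beta}$-determined controller into secondary controllability of the grid, completing the argument.
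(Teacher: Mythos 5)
Your proposal is correct and follows essentially the same route as the paper: the paper likewise introduces auxiliary quantities (a vector $\vec{u}$ and matrix $\Q$ bounding $\pm\B\W^{(\beta)}(\overline{p_d}+\underline{p_d})/2$ and $\pm\B\W^{(\beta)}$ entrywise) and argues that minimizing $\eta$ forces them down to the absolute values, then combines Lemma~\ref{lem:maxflow_beta} with Definition~\ref{def:reliable} and Proposition~\ref{prop:second_approx_beta} exactly as you do. Your explicit remarks that the entries are affine in $\vec{\beta}$ and that tightness relies on the nonnegativity of $(\overline{p_{di}}-\underline{p_{di}})/2$ make the argument slightly more careful than the paper's, but the substance is identical.
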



From (\ref{eq:maxflow_beta}), it can be seen that the formula for computing maximum flow on the lines consists of two separate sums which can be controlled by different vectors and obtained a better controller. Hence, one can define the $(\vec{\gamma},\vec{\beta})$-determined controller as follows.

\begin{definition}[$(\vec{\gamma},\vec{\beta})$-determined controller]\label{def:gam_beta_determined} For any demand vector $\vec{p_d}$, set $\vec{p_g}=(\sum_{i=1}^n (\overline{p_{di}}+\underline{p_{di}})/2)\times \vec{\gamma}+(\sum_{i=1}^n (p_{di}-\overline{p_{di}}/2-\underline{p_{di}}/2))\times \vec{\beta}$, for vectors $\vec{\gamma}$ and $\vec{\beta}$ satisfying:
\begin{itemize}
\item[(i)] $\vec{\beta},\vec{\gamma}\geq 0$, \quad (ii) $\vec{1}^T\vec{\gamma}=\vec{1}^T\vec{\beta}=1$,
\item[(iii)] $(\sum_{i=1}^n (\overline{p_{di}}+\underline{p_{di}})/2)\times \vec{\gamma}+(\sum_{i=1}^n (\overline{p_{di}}-\underline{p_{di}})/2)\times \vec{\beta}\leq \overline{p_g}$,
\item[(iv)] $(\sum_{i=1}^n (\overline{p_{di}}+\underline{p_{di}})/2)\times \vec{\gamma}+(\sum_{i=1}^n (-\overline{p_{di}}+\underline{p_{di}})/2)\times \vec{\beta}\geq \underline{p_g}$.
\end{itemize}
\end{definition}

The $(\vec{\gamma},\vec{\beta})$-determined controller generalizes the $\vec{\beta}$-determined controller (just set $\vec{\gamma}=\vec{\beta}$) and
it is easy to see that the maximum power flow on the lines over all demand vectors, given a $(\vec{\gamma},\vec{\beta})$-determined controller can be computed similar to (\ref{eq:maxflow_beta}) as follows:
\begin{equation}\label{eq:maxflow_beta_gamma}
\max_{\underline{p_d}\leq\vec{p_d}\leq \overline{p_d}} |f_k| = \left|\sum_{i=1}^n \frac{(\overline{p_{di}}+\underline{p_{di}})}{2} \B_k\vec{w_i}^{(\gamma)}\right| + \sum_{i=1}^n \frac{(\overline{p_{di}}-\underline{p_{di}})}{2} |\B_k\vec{w_i}^{(\beta)}|.
\end{equation}

Optimal $(\vec{\gamma},\vec{\beta})$-determined controller can be found similar to the optimal $\vec{\beta}$-determined controller using an optimization similar to (\ref{eq:optimalBeta}) with a few small changes: 
\begin{align}\label{eq:optimalGamBeta}
&\min_{\eta,\vec{\gamma},\vec{\beta},\vec{f}}&&\eta\\
&\quad\text{s.t.}&& \text{(i-iv) in Definition~\ref{def:gam_beta_determined}},\nonumber\\
&&&  \vec{f} = |\B\W^{(\gamma)}(\overline{p_d}+\underline{p_d})/2| + |\B\W^{(\beta)}|(\overline{p_d}-\underline{p_d})/2,\nonumber\\
&&& \vec{f}\leq \eta \overline{f}.\nonumber
\end{align}

Again, as in the $\vec{\beta}$-determined controller case, the optimal value of (\ref{eq:optimalGamBeta}) determines if the optimal $(\vec{\gamma},\vec{\beta})$-determined controller is reliable or not. Hence, the grid operator can use (\ref{eq:optimalGamBeta}) to efficiently determine the secondary controllability of the grid, albeit obtaining false negatives in some cases. 

In Section~\ref{sec:numerical}, we numerically evaluate the performance of the controllers introduced in this section. Before that, however, we demonstrate that these controllers can be used to efficiently provide lower bounds on the scale of MAD attacks that a grid can tolerate.

\section{$\alpha D$-robustness}\label{sec:contingency}
Power grids are required to withstand single equipment failures (e.g., lines, generators, and transformers) with no interruptions in their operation (a.k.a. $N-1$ standard)~\cite{wood2012power}. Following $N-1$ standard, we define a new standard for the grid operation to  ensure its robustness against MAD attacks. We call this $\alpha D$ standard that requires grids to be robust against contingencies resulted by changing the demand by $\alpha$ fraction. We call a grid that is robust against these type of contingencies, $\alpha D$-robust.

The $\alpha D$-robustness can be enforced during the economic dispatch using SAFE or IMMUNE algorithms developed in Section~\ref{sec:primary} so that no lines are overloaded after the primary control.
In this section, however, assuming that the power lines can handle temporary overloads (as in the previous section), we are interested in finding maximum $\alpha$ such that the grid is $\alpha D$-robust \emph{after the secondary control}. We denote that value by $\alpha^{\max}$. 

Since as we described in the previous section, verifying the secondary controllability of the grid for a given upper and lower limits on the demands is hard, we cannot expect to find the $\alpha^{\max}$ efficiently. Hence, in the next two subsections, we develop efficient methods for obtaining upper and lower bounds on $\alpha^{\max}$.

%
%
\subsection{Upper Bound}
Assume $\vec{p_d}^\dagger$ denotes the vector of predicted demand values. For a given $\alpha$, the demand vector $\vec{p_d}$ resulted by a MAD attack will be bounded by $(1-\alpha)\vec{p_d}^\dagger\leq \vec{p_d}\leq (1+\alpha)\vec{p_d}^\dagger$. Now if a grid is $\alpha D$-robust, it should particularly be robust against the maximum demand attack. Hence, finding the maximum $\alpha$ for which the grid can handle the maximum demand attack provides an upper bound for $\alpha^{\max}$.  Such $\alpha$ can be found efficiently by a LP:
 \begin{align}\label{eq:max_alpha}
&\max_{\alpha,\vec{p_d},\vec{p_g},\vec{f},\tet}&&\alpha\\
&\quad\quad\quad\text{s.t.}& &(\ref{eq:p}),(\ref{eqn:flow1}),(\ref{eqn:flow2}), (\ref{eq:f}), (\ref{eq:pg}),\nonumber\\
&&&\vec{p_d}=(1+\alpha) p_d^\dagger, \nonumber\\
&&& \vec{p} = (\vec{p_g}-\vec{p_d}).\nonumber
 \end{align}
\begin{proposition}\label{prop:alpha_hat}
Assume $\hat{\alpha}$ denotes the optimal value of (\ref{eq:max_alpha}), then $\alpha^{\max}\leq \hat{\alpha}$.
\end{proposition}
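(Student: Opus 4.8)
The plan is to exhibit $\alpha D$-robustness as a strictly stronger condition than the feasibility requirement encoded by the LP (\ref{eq:max_alpha}), so that the largest $\alpha$ surviving the weaker requirement must upper-bound $\alpha^{\max}$. The key observation is that among all MAD attacks admissible for a given $\alpha$ — namely all demand vectors with $(1-\alpha)\vec{p_d}^\dagger\leq\vec{p_d}\leq(1+\alpha)\vec{p_d}^\dagger$ — the uniform maximum demand attack $\vec{p_d}=(1+\alpha)\vec{p_d}^\dagger$ is one particular admissible choice (using $\vec{p_d}^\dagger\geq 0$). Since $\alpha D$-robustness requires secondary controllability against \emph{every} such attack, it in particular requires that this single attack can be cleared; and (\ref{eq:max_alpha}) is precisely the problem of computing the largest $\alpha$ for which this one attack is feasible. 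The whole proof is therefore a one-sided necessary-condition relaxation.

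Concretely, I would fix an arbitrary $\alpha$ for which the grid is $\alpha D$-robust and consider the demand vector $\vec{p_d}=(1+\alpha)\vec{p_d}^\dagger$, which lies in the admissible range. By the definition of secondary controllability, there exist generation values $\vec{p_g}$ with $\underline{p_g}\leq\vec{p_g}\leq\overline{p_g}$ and $\vec{1}^T(\vec{p_g}-\vec{p_d})=0$ whose induced flows satisfy $|f_{ij}|\leq\overline{f_{ij}}$ on every line. Setting $\vec{p}=\vec{p_g}-\vec{p_d}$, $\tet=\A^+\vec{p}$, and $\vec{f}=\B\vec{p}$, I would then verify that the tuple $(\alpha,\vec{p_d},\vec{p_g},\vec{f},\tet)$ satisfies every constraint of (\ref{eq:max_alpha}): the balance (\ref{eq:p}) follows from $\vec{1}^T(\vec{p_g}-\vec{p_d})=0$; the flow relations (\ref{eqn:flow1})--(\ref{eqn:flow2}) hold by construction; the thermal limits (\ref{eq:f}) and generation limits (\ref{eq:pg}) come directly from secondary controllability; and the final two constraints are met by our choice of $\vec{p_d}$ and $\vec{p}$. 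Hence this tuple is feasible for (\ref{eq:max_alpha}), so $\alpha\leq\hat{\alpha}$. Since every $\alpha$ at which the grid is $\alpha D$-robust satisfies $\alpha\leq\hat{\alpha}$, taking the supremum over all such $\alpha$ yields $\alpha^{\max}\leq\hat{\alpha}$, which avoids any need to argue that $\alpha^{\max}$ is attained.

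I do not expect a genuine technical obstacle, since the argument only uses a necessary condition in one direction. The one point deserving care is checking that the operating point guaranteed by secondary controllability maps cleanly onto a feasible point of the LP — in particular that the auxiliary variables $\tet$ and $\vec{f}$ can be filled in consistently via $\tet=\A^+\vec{p}$ and $\vec{f}=\B\vec{p}$ so that (\ref{eqn:flow1})--(\ref{eqn:flow2}) hold — together with confirming that the uniform maximum attack $(1+\alpha)\vec{p_d}^\dagger$ is genuinely admissible under the $\alpha D$-robustness definition, which relies on $\vec{p_d}^\dagger\geq 0$.
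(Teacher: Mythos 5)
Your proof is correct and follows essentially the same reasoning as the paper's: the LP (\ref{eq:max_alpha}) only certifies the single maximum-demand attack, which is a necessary condition for $\alpha D$-robustness, so its optimum upper-bounds $\alpha^{\max}$. You simply spell out the construction of the feasible LP point and the supremum argument in more detail than the paper's one-line proof does.
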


The optimal value of (\ref{eq:max_alpha}) provides a good upper bound for $\alpha^{\max}$ and can be computed efficiently. In the next subsection, we provide algorithms to find lower bounds for $\alpha$ based on the controllers developed in Section~\ref{subsec:secondary_predetermined}.
\subsection{Lower Bound}
To find a lower bound for $\alpha^{\max}$, we use the controllers in Section~\ref{subsec:secondary_predetermined} to limit the secondary controller's ability in changing the generators' operating points. Limiting the secondary controller's ability allows us to efficiently approximate the maximum $\alpha$, but because of this limitation, we only obtain lower bounds for $\alpha^{\max}$. 

First, assume that we limit the secondary controller to the $\vec{\beta}$-controller for a fixed $\vec{\beta}$. We show that in this case the maximum $\alpha$ can be found by solving a single LP. Assume $\vec{p_g}^*$ is the optimal solution to (\ref{eq:max_alpha}) with value $\hat{\alpha}$ and set $\vec{\beta}=\vec{p_g}^*/\|\vec{p_g}^*\|_1$ (i.e., controller only scales down the generation compared to the maximum demand case). Using (\ref{eq:maxflow_beta}), we show that the optimal value of the following LP gives a lower bound for $\alpha^{\max}$:
\begin{align}\label{eq:max_alpha_SC_controller}
&\max_{\alpha,\vec{f}} &&\alpha\\
&\quad\text{s.t.}&& (1+\alpha)(\sum_{i=1}^n p_{di}^\dagger)\times\vec{\beta}\leq \overline{p_g},\nonumber\\
&&&  (1-\alpha)(\sum_{i=1}^n{p_{di}^\dagger})\times \vec{\beta}\geq \underline{p_g},\nonumber\\
&&& \vec{\beta}=\vec{p_g}^*/\|\vec{p_g}^*\|_1,\nonumber\\
&&& \vec{f} = |\B\W^{(\beta)}\vec{p_d}^\dagger| + |\B\W^{(\beta)}|(\alpha\vec{p_d}^\dagger),\nonumber\\
&&& |f_{ij}|\leq \overline{f_{ij}}, \quad \forall(i,j)\in E.\nonumber
\end{align}

\begin{proposition}\label{prop:alpha_star}
The optimal solution $\alpha^*$ of (\ref{eq:max_alpha_SC_controller}) can be found in polynomial time using LP. Moreover, $\alpha^*\leq \alpha^{\max}$.
\end{proposition}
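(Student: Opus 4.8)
The plan is to establish two things: first, that~(\ref{eq:max_alpha_SC_controller}) is genuinely a linear program that can be solved in polynomial time, and second, that its optimal value $\alpha^*$ is a valid lower bound for $\alpha^{\max}$. For the first part, I would argue that although the constraint $\vec{f} = |\B\W^{(\beta)}\vec{p_d}^\dagger| + |\B\W^{(\beta)}|(\alpha\vec{p_d}^\dagger)$ superficially involves absolute values, the vector $\vec{\beta}=\vec{p_g}^*/\|\vec{p_g}^*\|_1$ is \emph{fixed} (it is computed in advance from the optimal solution $\vec{p_g}^*$ of~(\ref{eq:max_alpha})). Consequently $\W^{(\beta)}=[\vec{w_1}^{(\beta)},\dots,\vec{w_n}^{(\beta)}]$ is a fixed matrix, and both $|\B\W^{(\beta)}\vec{p_d}^\dagger|$ and the matrix of componentwise absolute values $|\B\W^{(\beta)}|$ are \emph{constants} that can be precomputed. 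Since $\vec{p_d}^\dagger\geq 0$ and $\alpha\geq 0$, the term $|\B\W^{(\beta)}|(\alpha\vec{p_d}^\dagger)$ is linear in $\alpha$ with known nonnegative coefficients, so the right-hand side is an affine function of $\alpha$ with no remaining absolute-value nonlinearity. The remaining constraints are plainly linear in $(\alpha,\vec{f})$, so the whole program is an LP solvable in polynomial time.

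For the second part, the key is to exhibit, for every $\alpha\leq\alpha^*$, an explicit reliable control policy witnessing secondary controllability against all attacks of magnitude $\alpha$, thereby certifying $\alpha D$-robustness and hence $\alpha^*\leq\alpha^{\max}$. I would take the $\vec{\beta}$-determined controller of Definition~\ref{def:beta_determined} with the fixed $\vec{\beta}=\vec{p_g}^*/\|\vec{p_g}^*\|_1$. For a given $\alpha$, the reachable demand set is $(1-\alpha)\vec{p_d}^\dagger\leq\vec{p_d}\leq(1+\alpha)\vec{p_d}^\dagger$; this is exactly the box $\underline{p_d}\leq\vec{p_d}\leq\overline{p_d}$ with $\overline{p_d}=(1+\alpha)\vec{p_d}^\dagger$ and $\underline{p_d}=(1-\alpha)\vec{p_d}^\dagger$, so the midpoint is $(\overline{p_d}+\underline{p_d})/2=\vec{p_d}^\dagger$ and the half-width is $(\overline{p_d}-\underline{p_d})/2=\alpha\vec{p_d}^\dagger$. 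Substituting these into the maximum-flow formula~(\ref{eq:maxflow_beta}) from Lemma~\ref{lem:maxflow_beta} yields precisely the flow expression $\vec{f}=|\B\W^{(\beta)}\vec{p_d}^\dagger|+|\B\W^{(\beta)}|(\alpha\vec{p_d}^\dagger)$ appearing in~(\ref{eq:max_alpha_SC_controller}). The first two constraints of~(\ref{eq:max_alpha_SC_controller}) are exactly conditions (iii) and (iv) of Definition~\ref{def:beta_determined} instantiated with these $\overline{p_d}$ and $\underline{p_d}$, while conditions (i) and (ii) hold automatically since $\vec{p_g}^*\geq 0$ (as a feasible generation vector) and $\vec{\beta}$ is normalized to sum to one. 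Hence any $\alpha$ feasible for~(\ref{eq:max_alpha_SC_controller}) makes the $\vec{\beta}$-determined controller both well-defined and reliable, so by Proposition~\ref{prop:second_approx_beta} the grid is secondary controllable for that $\alpha$, i.e. it is $\alpha D$-robust. Taking the optimum gives $\alpha^*\leq\alpha^{\max}$.

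The main obstacle I anticipate is verifying conditions (i)--(ii) of Definition~\ref{def:beta_determined} for the specific choice $\vec{\beta}=\vec{p_g}^*/\|\vec{p_g}^*\|_1$ and ensuring this normalization is well-posed. One must confirm that $\vec{p_g}^*\geq 0$ (guaranteeing $\vec{\beta}\geq 0$) and that $\|\vec{p_g}^*\|_1>0$ so the division is legitimate; the former follows from $\underline{p_g}\geq 0$ in constraint~(\ref{eq:pg}) and the latter from the fact that total generation must match the strictly positive total demand $(1+\hat\alpha)\vec{1}^T\vec{p_d}^\dagger>0$ under the supply--demand balance~(\ref{eq:p}). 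A secondary subtlety is confirming that the box representation $(1-\alpha)\vec{p_d}^\dagger\leq\vec{p_d}\leq(1+\alpha)\vec{p_d}^\dagger$ is the correct instantiation of the $\alpha D$ attack model and that Lemma~\ref{lem:maxflow_beta} applies verbatim to this box; once the midpoint/half-width identification above is in place, the application is immediate. The remainder is bookkeeping that the flow constraint $|f_{ij}|\leq\overline{f_{ij}}$ in~(\ref{eq:max_alpha_SC_controller}) is exactly the reliability condition of Definition~\ref{def:reliable}.
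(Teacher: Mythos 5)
Your proof is correct and follows essentially the same route as the paper's: restrict the secondary controller to the fixed $\vec{\beta}$-determined controller with $\vec{\beta}=\vec{p_g}^*/\|\vec{p_g}^*\|_1$, invoke Lemma~\ref{lem:maxflow_beta} (with midpoint $\vec{p_d}^\dagger$ and half-width $\alpha\vec{p_d}^\dagger$) to identify the flow constraint in (\ref{eq:max_alpha_SC_controller}) as the worst-case flow over the demand box, and conclude that any feasible $\alpha$ yields a reliable controller and hence $\alpha D$-robustness, so $\alpha^*\leq\alpha^{\max}$. Your LP argument is even slightly more direct than the paper's (which re-uses the auxiliary-variable reformulation from Proposition~\ref{prop:LPBeta}), since you observe that with $\vec{\beta}$ fixed the absolute-value terms are precomputable constants and the constraint is already affine in $(\alpha,\vec{f})$.
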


Optimization (\ref{eq:max_alpha_SC_controller}) allows us to efficiently compute a lower bound for $\alpha^{\max}$. However, similar to  Section~\ref{subsec:secondary_predetermined}, instead of fixing $\vec{\beta}$, we can compute a $\vec{\beta}$ that results in the largest possible lower bound. Due to nonlinearity of the problem, however, we cannot optimize $\vec{\beta}$ and found maximum $\alpha$ in (\ref{eq:max_alpha_SC_controller}) simultaneously. The idea is to fix $\alpha$, compute the optimal $\vec{\beta}$ and $\eta$ using (\ref{eq:optimalBeta}), then update $\alpha$ using $\eta$ and repeat the process until $\alpha$ does not change by much. As in Section~\ref{subsec:secondary_predetermined}, we can use the $(\vec{\gamma},\vec{\beta})$-determined controller instead of the $\vec{\beta}$-determined controller to improve the obtained lower bound. The method is summarized in Module~\ref{module:gambeta_lowerbound}. When $\gamma=\beta$, Module~\ref{module:gambeta_lowerbound} provides a lower bound on $\alpha^{\max}$ like $\alpha^{(\beta)}$ based on $\vec{\beta}$-determined controllers.

Notice that $\lambda$ in Module~\ref{module:gambeta_lowerbound} should be set such that updates to $\alpha$ at each iteration are neither too large that the module falls into a loop, nor are too small that it takes a long time to converge.

\begin{proposition}\label{prop:beta_lowerbound}
When $\gamma=\beta$, for a good $\lambda$, Module~\ref{module:gambeta_lowerbound} converges to a $\alpha^{(\beta)}$ value such that $\alpha^{(\beta)}\leq \alpha^{\max}$. Moreover, $\alpha^*\leq \alpha^{(\beta)}$. (Recall that $\alpha^*$ is the optimal solution of (\ref{eq:max_alpha_SC_controller}).)
\end{proposition}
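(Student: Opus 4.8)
The plan is to track the quantity $\eta$ produced by the optimization (\ref{eq:optimalBeta}) as a function of the attack magnitude. For the case $\gamma=\beta$, define $g(\alpha)$ to be the optimal value $\eta^*$ of (\ref{eq:optimalBeta}) when the demand limits are set to $\overline{p_d}=(1+\alpha)\vec{p_d}^\dagger$ and $\underline{p_d}=(1-\alpha)\vec{p_d}^\dagger$, so that Module~\ref{module:gambeta_lowerbound} performs the one-dimensional fixed-point iteration $\alpha^{(i+1)}=\alpha^{(i)}+\lambda(1-g(\alpha^{(i)}))$. The crucial observation is the meaning of $g$: by Proposition~\ref{prop:LPBeta}, $g(\alpha)\leq 1$ holds exactly when the optimal $\vec{\beta}^*$-determined controller keeps every line flow within capacity over the whole demand box, i.e.\ when that controller is reliable against attacks of magnitude $\alpha$. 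Any fixed point of the iteration satisfies $\lambda(1-g(\alpha))=0$, hence $g(\alpha^{(\beta)})=1$ at convergence.

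The first step is to establish that $g$ is continuous and non-decreasing on the range of $\alpha$ where (\ref{eq:optimalBeta}) is feasible. Using the closed form (\ref{eq:maxflow_beta}) with $(\overline{p_d}+\underline{p_d})/2=\vec{p_d}^\dagger$ and $(\overline{p_d}-\underline{p_d})/2=\alpha\vec{p_d}^\dagger$, the flow attained on line $e_k$ by any fixed $\vec{\beta}$ is $|\B_k\W^{(\beta)}\vec{p_d}^\dagger|+\alpha\,|\B_k\W^{(\beta)}|\vec{p_d}^\dagger$, which is non-decreasing in $\alpha$; simultaneously the feasible set of $\vec{\beta}$ (constraints (iii)--(iv) of Definition~\ref{def:beta_determined}) only shrinks as $\alpha$ grows. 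Both effects push the minimal achievable $\eta$ upward, so $g$ is non-decreasing, while continuity follows from parametric LP sensitivity. Monotonicity then gives a well-defined crossing $\alpha^{(\beta)}:=\sup\{\alpha:g(\alpha)\leq 1\}$, and the update rule increases $\alpha$ whenever $g<1$ and decreases it whenever $g>1$; for $\lambda$ small enough that the residual map $\alpha\mapsto\alpha+\lambda(1-g(\alpha))$ is a contraction near this crossing (a ``good'' $\lambda$), the iteration converges to $\alpha^{(\beta)}$.

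The second step establishes $\alpha^{(\beta)}\leq\alpha^{\max}$. At convergence $g(\alpha^{(\beta)})=1\leq 1$, so Proposition~\ref{prop:LPBeta} yields a reliable $\vec{\beta}^*$-determined controller for the demand limits $(1\pm\alpha^{(\beta)})\vec{p_d}^\dagger$; Proposition~\ref{prop:second_approx_beta} then certifies that the grid is secondary controllable against every MAD attack of magnitude $\alpha^{(\beta)}$, i.e.\ it is $\alpha^{(\beta)}D$-robust, whence $\alpha^{(\beta)}\leq\alpha^{\max}$ by the definition of $\alpha^{\max}$.

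The final step is the comparison $\alpha^*\leq\alpha^{(\beta)}$. Here I would substitute the specific $\vec{\beta}=\vec{p_g}^*/\|\vec{p_g}^*\|_1$ used in (\ref{eq:max_alpha_SC_controller}) into (\ref{eq:optimalBeta}) at $\alpha=\alpha^*$: the substitutions above make the flow expression and the generation bounds in (\ref{eq:optimalBeta}) coincide with the constraints of (\ref{eq:max_alpha_SC_controller}), which the pair $(\alpha^*,\vec{\beta})$ satisfies with $\vec{f}\leq\overline{f}$. Thus this $\vec{\beta}$ is feasible for (\ref{eq:optimalBeta}) with objective value $\eta=1$, so the minimizer obeys $g(\alpha^*)\leq 1$, and since $\alpha^{(\beta)}=\sup\{\alpha:g(\alpha)\leq 1\}$ we conclude $\alpha^*\leq\alpha^{(\beta)}$. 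I expect the convergence argument of the first paragraph to be the main obstacle, because the guarantee is only asserted for an unspecified ``good'' $\lambda$; the cleanest route is to treat the scheme as a fixed-point iteration for the monotone residual $1-g(\alpha)$ and invoke standard contraction/bisection reasoning near the crossing, rather than attempting global convergence for all $\lambda$.
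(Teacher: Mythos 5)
Your proof is correct, and it takes a noticeably different (and more careful) route than the paper's. The paper argues that $\alpha^{\max}$ itself is the unique absorbing point of the iteration, claiming that $\eta>1$ precisely when $\alpha^{(i)}>\alpha^{\max}$ and $\eta<1$ when $\alpha^{(i)}<\alpha^{\max}$; but the second implication is not valid in general, since a $\vec{\beta}$-determined controller is only a \emph{sufficient} certificate of secondary controllability (the paper itself concedes these controllers yield false negatives), so a grid can be $\alpha D$-robust while no reliable $\vec{\beta}$-determined controller exists and $\eta>1$. Your formulation — tracking $g(\alpha)=\eta^*(\alpha)$, showing it is non-decreasing via the two monotonicity effects (the flow bound $|\B_k\W^{(\beta)}\vec{p_d}^\dagger|+\alpha|\B_k\W^{(\beta)}|\vec{p_d}^\dagger$ grows in $\alpha$ and the feasible set of $\vec{\beta}$ shrinks), and identifying the limit as $\sup\{\alpha:g(\alpha)\leq 1\}$ — is the correct fixed-point picture, and it is exactly what explains why the conclusion is the inequality $\alpha^{(\beta)}\leq\alpha^{\max}$ (via Propositions~\ref{prop:LPBeta} and~\ref{prop:second_approx_beta}) rather than equality. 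You also supply the comparison $\alpha^*\leq\alpha^{(\beta)}$ by exhibiting $\vec{\beta}=\vec{p_g}^*/\|\vec{p_g}^*\|_1$ as a feasible point of (\ref{eq:optimalBeta}) with $\eta=1$ at $\alpha=\alpha^*$; the paper's proof omits this second claim entirely. The only soft spot in both arguments is the convergence guarantee itself, which rests on an unspecified ``good'' $\lambda$; your contraction/monotone-residual framing is a reasonable way to make that respectable, though strictly one should also note that the module's stopping rule ($|\alpha^{(i+1)}-\alpha^{(i)}|\leq 0.001$) returns an approximate rather than exact fixed point.
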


\begin{module}[t]
\caption{Lower Bound on $\alpha^{\max}$ using $(\vec{\gamma},\vec{\beta})$-determined Controllers}
\label{module:gambeta_lowerbound}
\small
\begin{trivlist}
\item\textbf{Input:} $G$, $\lambda$
\end{trivlist}
\begin{algorithmic}[1]
\STATE $\alpha^{(0)} = \hat{\alpha}$
\STATE flag = 1
\STATE i = 0
\WHILE{flag}
    \STATE flag = 0
    \STATE Compute the optimal value $\eta$, $\vec{\gamma}$, and $\vec{\beta}$ of (\ref{eq:optimalGamBeta}) for $\overline{p_d}=(1+\alpha^{(i)})\vec{p_d}^\dagger$ and $\underline{p_d}=(1-\alpha^{(i)})\vec{p_d}^\dagger$
    \STATE Set $\alpha^{(i+1)}=\alpha^{(i)}+\lambda (1-\eta)$
    \IF{$|\alpha^{(i+1)}-\alpha^{(i)}|>0.001$}
        \STATE flag = 1
        \STATE $i = i+1$
    \ENDIF
\ENDWHILE
\STATE \textbf{return} $\alpha^{(\gamma, \beta)}:=\alpha^{(i)}$, $\vec{\gamma}$, and $\vec{\beta}$
\end{algorithmic}
\end{module}

\begin{proposition}\label{prop:gambeta_lowerbound}
For a good $\lambda$, Module~\ref{module:gambeta_lowerbound} converges to a $\alpha^{(\gamma,\beta)}$ value such that $\alpha^{(\gamma,\beta)}\leq \alpha^{\max}$. Moreover, $\alpha^{(\beta)}\leq \alpha^{(\gamma,\beta)}$.
\end{proposition}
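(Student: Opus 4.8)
The plan is to analyze the iteration in Module~\ref{module:gambeta_lowerbound} through the per-iteration optimal value. For a fixed perturbation level $\alpha$, let $\eta^{(\gamma,\beta)}(\alpha)$ denote the optimal value of (\ref{eq:optimalGamBeta}) with $\overline{p_d}=(1+\alpha)\vec{p_d}^\dagger$ and $\underline{p_d}=(1-\alpha)\vec{p_d}^\dagger$, and let $\eta^{(\beta)}(\alpha)$ be the analogous value of (\ref{eq:optimalBeta}). With these bounds one has $(\overline{p_d}+\underline{p_d})/2=\vec{p_d}^\dagger$ (independent of $\alpha$) and $(\overline{p_d}-\underline{p_d})/2=\alpha\vec{p_d}^\dagger$. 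The module is then exactly the fixed-point iteration $\alpha^{(i+1)}=\alpha^{(i)}+\lambda\bigl(1-\eta^{(\gamma,\beta)}(\alpha^{(i)})\bigr)$, whose fixed points are the $\alpha$ with $\eta^{(\gamma,\beta)}(\alpha)=1$.

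First I would establish that $\eta^{(\gamma,\beta)}(\alpha)$ is continuous and non-decreasing in $\alpha$. Monotonicity follows from a two-sided comparison: for $\alpha_1<\alpha_2$ the feasible set of $(\vec{\gamma},\vec{\beta})$ defined by (iii)--(iv) of Definition~\ref{def:gam_beta_determined} shrinks, and for every fixed $(\vec{\gamma},\vec{\beta})$ the maximum flow (\ref{eq:maxflow_beta_gamma}) equals $|\sum_i p_{di}^\dagger\B_k\vec{w_i}^{(\gamma)}|+\alpha\sum_i p_{di}^\dagger|\B_k\vec{w_i}^{(\beta)}|$, whose second term grows with $\alpha$; both effects can only increase the minimal admissible $\eta$, so $\eta^{(\gamma,\beta)}(\alpha_2)\ge\eta^{(\gamma,\beta)}(\alpha_1)$. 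Continuity is the standard continuity of the optimal value of a linear program in data entering its coefficients affinely.

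The key observation is that at any $\alpha$ with $\eta^{(\gamma,\beta)}(\alpha)\le 1$ the optimal $(\vec{\gamma}^*,\vec{\beta}^*)$ gives $|\vec{f}|\le\eta^{(\gamma,\beta)}(\alpha)\overline{f}\le\overline{f}$, i.e.\ a reliable $(\vec{\gamma},\vec{\beta})$-determined controller for the demand box $(1\pm\alpha)\vec{p_d}^\dagger$. Applying the reasoning of Proposition~\ref{prop:second_approx_beta} to $(\vec{\gamma},\vec{\beta})$-controllers (such a controller returns, for every feasible $\vec{p_d}$, a generation vector with $\vec{1}^T(\vec{p_g}-\vec{p_d})=0$, $\underline{p_g}\le\vec{p_g}\le\overline{p_g}$, and $|\vec{f}|\le\overline{f}$), the grid is secondary controllable over that box, hence $\alpha D$-robust, so $\alpha\le\alpha^{\max}$. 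Since $\eta^{(\gamma,\beta)}$ is continuous and non-decreasing with $\eta^{(\gamma,\beta)}(0)\le 1$ (the nominal dispatch keeps all flows within capacity, taking $\vec{\gamma}=\vec{p_g}^\dagger/\|\vec{p_g}^\dagger\|_1$) and $\eta^{(\gamma,\beta)}(\alpha)>1$ for $\alpha>\hat\alpha$ (by Proposition~\ref{prop:alpha_hat} even the maximum-demand attack cannot then be served), the intermediate value theorem yields a fixed point $\alpha^{(\gamma,\beta)}\in(0,\hat\alpha]$ with $\eta^{(\gamma,\beta)}(\alpha^{(\gamma,\beta)})=1$, and the previous sentence gives $\alpha^{(\gamma,\beta)}\le\alpha^{\max}$. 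For convergence I would argue that for $\lambda$ small enough the update map moves $\alpha$ toward this fixed point (decreasing $\alpha$ when $\eta>1$, increasing it when $\eta<1$) without overshooting, so from $\alpha^{(0)}=\hat\alpha$ the sequence is monotone and bounded, hence convergent; this is precisely the meaning of ``a good $\lambda$,'' and the threshold $0.001$ only stops the process arbitrarily close to the fixed point.

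For $\alpha^{(\beta)}\le\alpha^{(\gamma,\beta)}$ I would use that setting $\vec{\gamma}=\vec{\beta}$ turns (\ref{eq:optimalGamBeta}) into (\ref{eq:optimalBeta}) and Definition~\ref{def:gam_beta_determined} into Definition~\ref{def:beta_determined}, so the $\vec{\beta}$-only program is the restriction of the $(\vec{\gamma},\vec{\beta})$-program to the diagonal $\vec{\gamma}=\vec{\beta}$; hence $\eta^{(\gamma,\beta)}(\alpha)\le\eta^{(\beta)}(\alpha)$ for every $\alpha$. Evaluating at $\alpha=\alpha^{(\beta)}$, where $\eta^{(\beta)}(\alpha^{(\beta)})=1$, gives $\eta^{(\gamma,\beta)}(\alpha^{(\beta)})\le 1=\eta^{(\gamma,\beta)}(\alpha^{(\gamma,\beta)})$, and since $\eta^{(\gamma,\beta)}$ is non-decreasing this forces $\alpha^{(\beta)}\le\alpha^{(\gamma,\beta)}$. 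The main obstacle is the convergence claim: making ``a good $\lambda$'' precise requires controlling the slope of $\eta^{(\gamma,\beta)}$ near the fixed point (a contraction/no-overshoot argument) and coping with the fact that $\eta^{(\gamma,\beta)}$ is only piecewise-linear and non-decreasing rather than strictly increasing, so the sign behavior of the update and the selection of the fixed point must be handled with care.
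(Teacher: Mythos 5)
Your proof is correct and follows the same overall strategy as the paper's — a fixed-point analysis of the update $\alpha^{(i+1)}=\alpha^{(i)}+\lambda(1-\eta)$ for the convergence claim, plus the observation that $\vec{\beta}$-determined controllers are the restriction of $(\vec{\gamma},\vec{\beta})$-determined controllers to $\vec{\gamma}=\vec{\beta}$ for the inequality $\alpha^{(\beta)}\leq\alpha^{(\gamma,\beta)}$ — but your treatment of the first claim is materially different and, in fact, sounder. The paper's argument (given for Proposition~\ref{prop:beta_lowerbound} and invoked by reference here) asserts that $\eta>1$ exactly when $\alpha^{(i)}>\alpha^{\max}$ and that $\alpha^{\max}$ is the unique absorbing point, which would make the module converge to $\alpha^{\max}$ itself; this cannot be right in general, since the predetermined controller is only a sufficient certificate for secondary controllability, and indeed the paper's own Table~\ref{tb:compare} shows $\alpha^{(\gamma,\beta)}$ strictly below $\alpha^{\max}$ in several test cases. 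You instead identify the limit as the saturation point $\eta^{(\gamma,\beta)}(\alpha)=1$ of the best $(\vec{\gamma},\vec{\beta})$-controller, prove monotonicity and continuity of $\eta^{(\gamma,\beta)}(\cdot)$ to get existence of that fixed point between $0$ and $\hat\alpha$, and then obtain $\alpha^{(\gamma,\beta)}\leq\alpha^{\max}$ from the separate implication that reliability of a predetermined controller implies secondary controllability (the Proposition~\ref{prop:second_approx_beta} argument extended to $(\vec{\gamma},\vec{\beta})$-controllers). Your derivation of $\alpha^{(\beta)}\leq\alpha^{(\gamma,\beta)}$ via $\eta^{(\gamma,\beta)}(\alpha)\leq\eta^{(\beta)}(\alpha)$ and monotonicity also makes precise what the paper leaves at ``special case.'' The one point you rightly flag as delicate — pinning down ``a good $\lambda$'' when $\eta^{(\gamma,\beta)}$ is only piecewise-linear and non-strictly increasing, so the level set $\{\eta=1\}$ may be an interval and the iteration may overshoot — is left equally informal in the paper, so your proposal is no weaker there and is stronger everywhere else.
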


In the next section, we numerically compare the upper bound $\hat{\alpha}$, and lower bounds $\alpha^*$, $\alpha^{(\beta)}$, and $\alpha^{(\gamma,\beta)}$ with $\alpha^{\max}$ in order to demonstrate the tightness of these bounds in approximating $\alpha^{\max}$.

\section{Numerical Results}\label{sec:numerical}
In this section, we first numerically evaluate the performance of SAFE and IMMUNE Algorithms developed in Section~\ref{sec:primary}. 
Then, we numerically evaluate the accuracy of the upper and lower bounds developed in Section~\ref{sec:contingency} in approximating the maximum $\alpha$ such that the grid is $\alpha D$-robust (i.e., $\alpha^{\max}$).
\subsection{Simulations Setup}
For solving LP, we use \emph{CVX}, a package for specifying and solving convex programs~\cite{cvx,gb08}. For computing the optimal power flow part of the IMMUNE Algorithm, we use \emph{MATPOWER}~\cite{zimmerman2011matpower} which is a MATLAB based library for computing the power flows. We also exploit the power system test cases available with this library for our simulations. In particular, we use  IEEE 14-bus, 30-bus, and 57-bus test systems, and the New England 39-bus system.

The line capacities are only provided for the IEEE 30-bus and New England 39-bus   systems. Hence, for the other two systems, we set the capacities ourselves in two-different ways: (i) following \cite{hale2016ACDC} for each line we set $\overline{f_i}=\max\{1.2|f_i^\dagger|,\text{median}(|\vec{f}^\dagger|)\}$,  and (ii) set $\overline{f_i}=1.1\max(|\vec{f}^\dagger|)$, in which $\vec{f}^\dagger$ are the power flows given the default supply and demand values in the test systems. When the first method is used for determining the capacities, it is indicated by (f) in front of the grid name, and when the second method is used, it is indicated by (u) (e.g., see Table~\ref{tb:compare}).

\subsection{Primary Control}\label{subsec:prime_num}

\begin{table}[t]
\caption{Performance Evaluation of SAFE and IMMUNE Algorithms on New England 39-bus system. Cost values are in $\$/hr$. Numbers in parenthesis indicate the number of iterations took the IMMUNE Algorithm to converge.}
\vspace*{-0.2cm}
\centering
\begin{tabular}{|c|c|c|c|c|c|}
\hline
$\alpha$& OPF &SAFE&IMMUNE&IMMUNE-0.95&IMMUNE-0.9\\
\hline
0.09&41264&-&43434 (7)&43805 (4)&43859 (3)\\
\hline
0.08&41264&43628&42394 (8)& 42431 (3)&42982 (3)\\
\hline
0.07&41264&42665&41773 (5)& 41991 (3)&42405 (3)\\
\hline
0.06&41264&42050&41492 (4)& 41698 (3)&41534 (2)\\
\hline
0.05&41264&41668&41339 (10)& 41421 (3)&41419 (2)\\
\hline
\end{tabular}\label{tb:compare_primary}
\end{table}

\begin{figure}[t]
\centering
\includegraphics[scale=0.4]{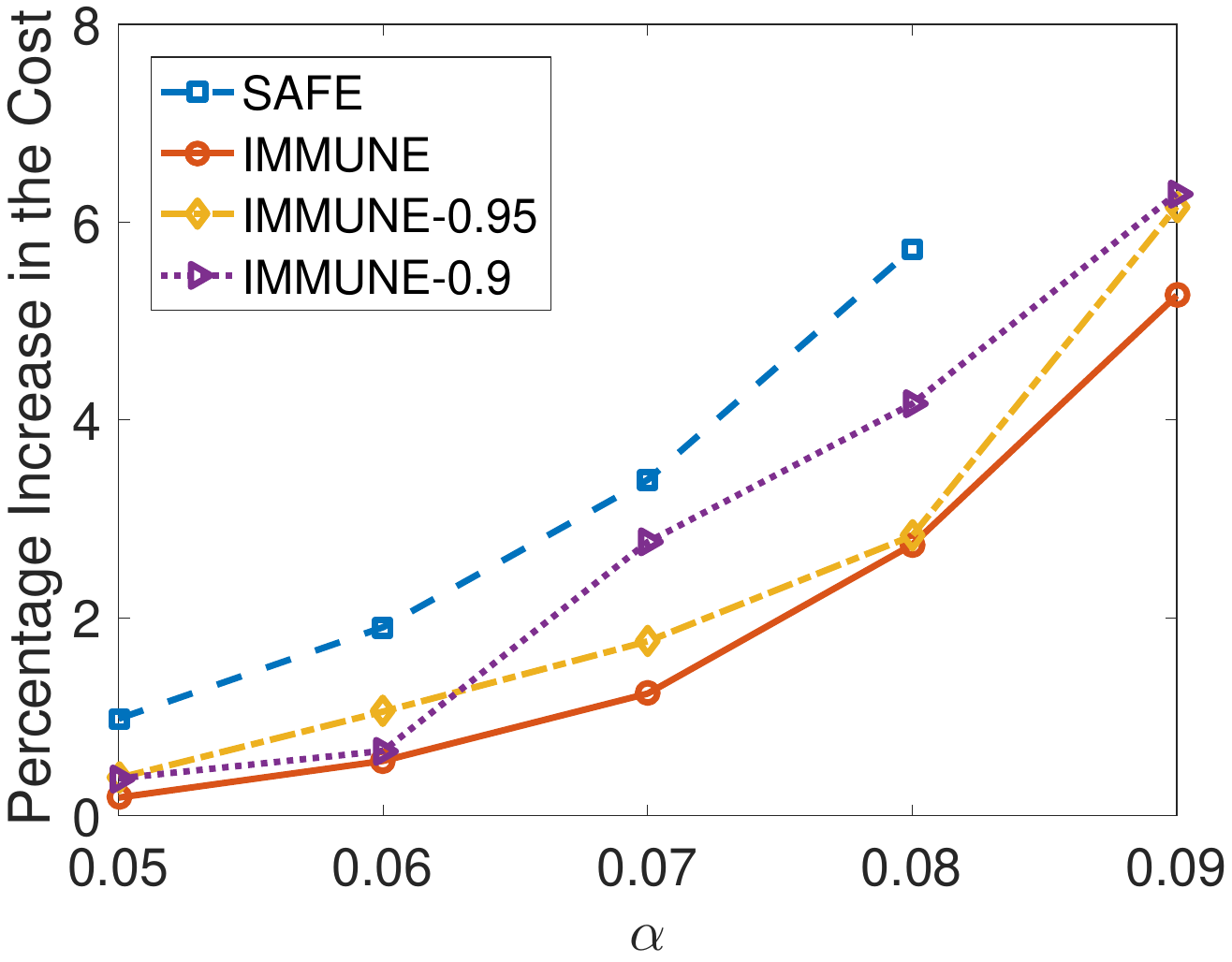}
\caption{Percentage increase in operating cost of the grid in order to make it robust against MAD attacks obtained by SAFE and IMMUNE Algorithms versus the magnitude of the attack ($\alpha$) in New England 39-bus system.}
\label{fig:primary}
\end{figure}

In this subsection, we evaluate the performance of SAFE and IMMUNE Algorithms on NEW England 39-bus and IEEE 30-bus systems. We assume that $(1-\alpha)p_{di}^\dagger\leq p_{di}\leq (1+\alpha) p_{di}^\dagger$ and consider different $\alpha$ values to capture attacks with different magnitudes (which depends on the number of controlled bots by an adversary).

Table~\ref{tb:compare_primary} compares the performance of SAFE and three variations of the IMMUNE Algorithm for different $\alpha$ values. Recall from Section~\ref{subsec:primary_nolimit} that IMMUNE-0.95 and IMMUNE-0.9 are similar to the IMMUNE Algorithm but apply more aggressive updates on the capacities in each iteration of the algorithm. This, as mentioned in Section~\ref{subsec:primary_nolimit} and demonstrated numerically here in Table~\ref{tb:compare_primary}, results in faster convergence of the algorithm. Since the OPF problem does not consider robustness of the grid against MAD attacks, its value is independent of the magnitude of an expected attack ($\alpha$).

As can be seen in Table~\ref{tb:compare_primary} and as we expected, the grid needs to be operated in a non-optimal operating point in order to be robust against MAD attacks. The required percentage increase in the operating cost of the grid obtained by the SAFE and IMMUNE Algorithms versus $\alpha$ are presented in Fig.~\ref{fig:primary}. IMMUNE Algorithm results in the least amount of increase in the operating cost. However, since as demonstrated in Table~\ref{tb:compare_primary}, IMMUNE Algorithm takes longer that IMMUNE-0.95 and IMMUNE-0.9 Algorithms to converge, the system operator may prefer to use IMMUNE-0.95 which performs approximately as well as the IMMUNE Algorithm but converges faster. Notice that due to nonconvexity of the problem, a more aggressive update rule may not necessary result in a costlier operating point, as we see here that IMMUNE-0.9 results in a lower operating cost than IMMUNE-0.95 for $\alpha=0.06$.

It can also be seen that SAFE Algorithm performs relatively well in finding a robust operating point of the grid much faster than all variations of IMMUNE Algorithm (recall from Section~\ref{subsec:primary_limit} that SAFE Algorithm requires only to solve a single LP). However, it may become infeasible for higher magnitude attacks (in this case for $\alpha=0.09$).

\begin{table}[t]
\caption{Performance Evaluation of SAFE and IMMUNE Algorithms on IEEE 30-bus system. Cost values are in $\$/hr$. Numbers in parenthesis indicate the number of iterations took the algorithm to converge.}
\vspace*{-0.2cm}
\centering
\begin{tabular}{|c|c|c|c|c|c|}
\hline
$\alpha$& OPF &SAFE&IMMUNE\\
\hline
0.31&565.2&-&- (3)\\
\hline
0.3&565.2&614.8&- (4)\\
\hline
0.28&565.2&571.6&569.6 (3)\\
\hline
0.26&565.2&565.32&565.22 (2)\\
\hline
0.22&565.2&565.2&565.2 (1)\\
\hline
\end{tabular}\label{tb:compare_primary30}
\end{table}

We repeated the simulations in IEEE 30-bus system. The results are presented in Table~\ref{tb:compare_primary30}. First, it can be seen that the IEEE 30-bus system can be protected against much stronger attacks ($\alpha=0.3$) which demonstrates that different grids may have different levels of robustness against MAD attacks (we will make a similar observation in the secondary control case in the next subsection). Unlike the New England 39-bus case, here the IMMUNE Algorithm does not converge for the strongest attack ($\alpha=0.3$) rather than the SAFE Algorithm. This demonstrates that each of these algorithms may be useful in finding a robust operating point for the grid in different scenarios--besides their running time and optimality.

As can be seen in Table~\ref{tb:compare_primary30}, in this case also, if the IMMUNE Algorithm converges, it converges to a lower cost operating point than the one obtained by the SAFE Algorithm. Here, the IMMUNE Algorithm converged within few iterations. Therefore, there were no need to consider faster variation of the IMMUNE Algorithm as in the New England 39-bus case.

Finally, it can be seen that for $\alpha=0.31$, none of the algorithms can obtain a robust operating point for the grid. We show in the next subsection that this case can be handled by the secondary controller instead (assuming that lines can handle temporary overloads).

\subsection{Secondary Control}\label{subsec:sec_num}
In order to evaluate the performance of the controllers developed in Section~\ref{subsec:secondary_predetermined}, in this subsection, we focus on their performance in approximating $\alpha^{\max}$ as described in Section~\ref{sec:contingency}.

Table~\ref{tb:compare} compares the maximum $\alpha$ obtained by different methods in several test cases. As can be seen and proved in Section~\ref{sec:contingency}, in all cases, $\alpha^*\leq\alpha^{(\beta)}\leq\alpha^{(\gamma,\beta)}\leq\alpha^{\max}\leq\hat{\alpha}$. Notice that for the IEEE 57-bus system, since the brute force search algorithm needs to solve (\ref{eq:minPg}) about $2^{42}$ times for each given $\alpha$ to determine the secondary controllability of the grid, we could not exactly determine $\alpha^{\max}$. However, in the case of IEEE 57-bus (f), after initial iterations of the brute force search algorithm, we could determine that the grid is not secondary controllable for $0.09\leq \alpha$ as presented in the table.

It can be seen that $\hat{\alpha}$ provides a very close upper bound for $\alpha^{\max}$ most of the time (except in IEEE 57-bus (f)). And since it can be computed by a single LP, the numerical results suggest that it is an efficient and reliable way to find an upper bound for $\alpha^{\max}$. On the other hand, $\alpha^*$ that can also be computed efficiently by a single LP, does not provide a very close lower bound in the test systems that we studied here. However, $\alpha^{(\beta)}$ and $\alpha^{(\gamma,\beta)}$ that require more time to be computed, provide much better lower bounds. In particular, in the case of New England 39-bus system $\alpha^{(\gamma,\beta)}=\hat{\alpha}$ which implies that $\alpha^{\max}=\alpha^{(\gamma,\beta)}=\hat{\alpha}$.

\begin{table}[t]
\caption{Lower and upper bounds for $\alpha^{\max}$.}
\vspace*{-0.2cm}
\centering
\begin{tabular}{|l|c|c|c||c||c|}
\hline
Test case& $\alpha^*$ &$\alpha^{(\beta)}$&$\alpha^{(\gamma,\beta)}$&$\alpha^{\max}$&$\hat{\alpha}$\\
\hline
{\ttfamily \small IEEE 14-bus (f)}& 0.058 & 0.1649 & 0.1906&0.2117&0.2117\\
\hline
{\ttfamily \small IEEE 14-bus (u)}& 0.950 & 1.0243 & 1.1454&1.1479&1.1479\\
\hline
{\ttfamily \small IEEE 30-bus}& 0.214 & 0.2851 & 0.3126&0.37&0.3717\\
\hline
{\ttfamily \small NE 39-bus}& 0.039 & 0.0796 &0.0962&0.0962&0.0962\\
\hline
{\ttfamily \small IEEE 57-bus (f)} & 0.024& 0.0307& 0.0311&$<0.09$&0.2\\
\hline
{\ttfamily \small IEEE 57-bus (u)}& 0.128 & 0.2396 & 0.2864&-&0.3468\\
\hline
\end{tabular}\label{tb:compare}
\end{table}

\begin{figure}[t]
\centering
\includegraphics[scale=0.35]{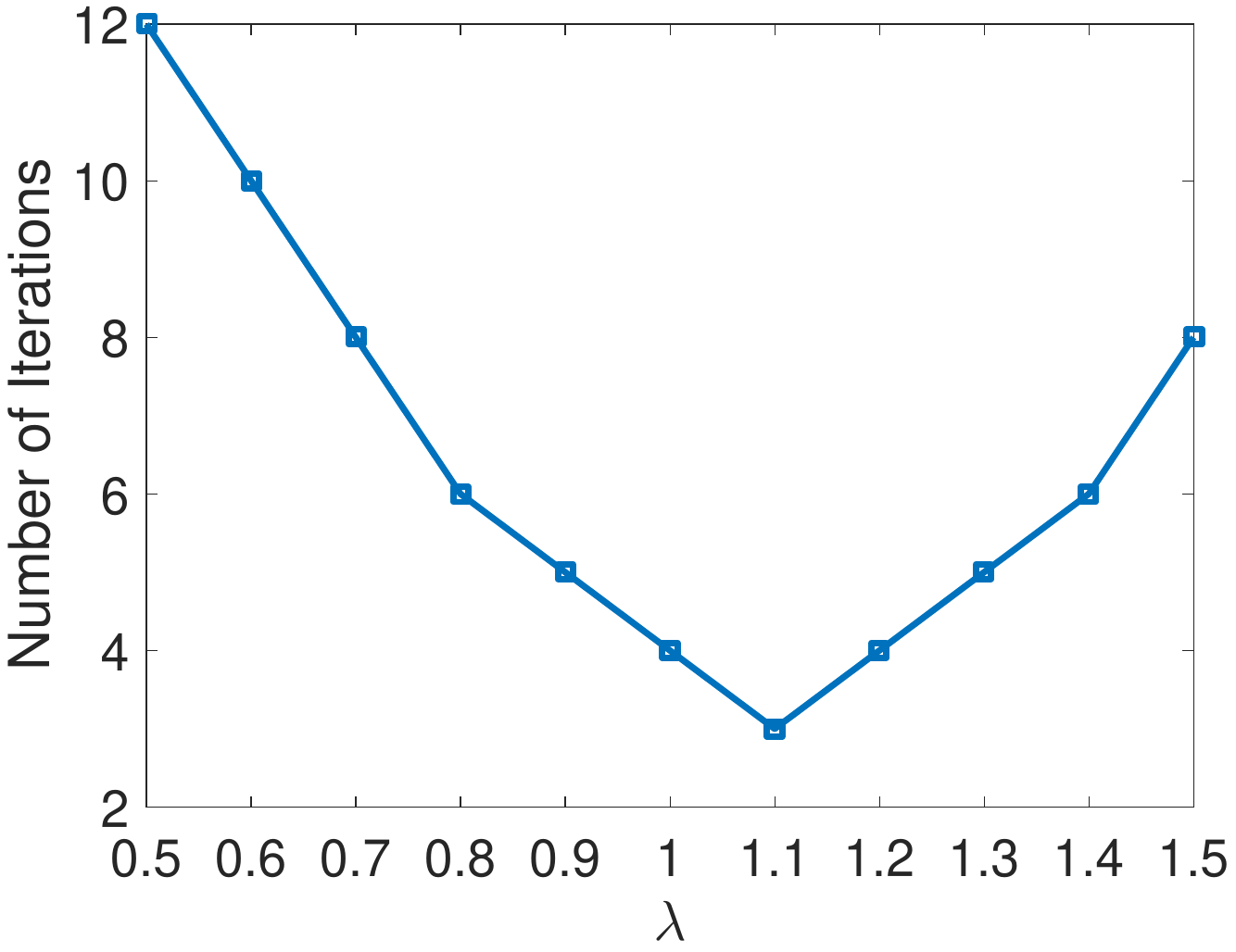}
\caption{Number of iterations in Module~\ref{module:gambeta_lowerbound} before it converges versus its update step size $\lambda$ in IEEE 30-bus system.}
\label{fig:runVSlambda}
\end{figure}

Although finding $\alpha^{(\beta)}$ and $\alpha^{(\gamma,\beta)}$ requires solving an LP in several iterations (as summarized in Module~\ref{module:gambeta_lowerbound}), the number of iterations can be minimized by selecting a good step size $\lambda$. For example, the number of iterations of Module~\ref{module:gambeta_lowerbound} versus $\lambda$ is presented in Fig.~\ref{fig:runVSlambda} in the IEEE 30-bus system. As can be seen, for the optimal $\lambda$ (in this case $\lambda=1.1$), the module converges in 3 iterations. Hence, it can find a good lower bound for $\alpha$, as shown in Table~\ref{tb:compare}, very efficiently and in polynomial time (since it solves a single LP at each iteration). A good $\lambda$ can be found in practice heuristically after the first few iterations and observing the rate of changes.

Finally, as mentioned in Section~\ref{sec:secondary}, the secondary controllability becomes more important when the primary controller cannot prevent line overloads, but the overloads can be tolerated for a short period of time. An example of such scenario happens in IEEE 30-bus system and when $\alpha=0.31$. As can be seen in Table~\ref{tb:compare_primary30}, none of the SAFE and IMMUNE Algorithms can find a robust operating point for the grid in this case. However, as can be seen in Table~\ref{tb:compare}, since this value is less that $\alpha^{\max}=0.37$, any line overloads can be cleared by the secondary controller.

\section{Conclusions}\label{sec:conclusions}
In this paper, we analyzed the effect of MAD attacks on power flows in detail and presented SAFE and IMMUNE algorithms for finding robust operating points for the generators during economic dispatch such that no lines are overloaded after automatic primary control response to any MAD attacks. Moreover, we demonstrated that in cases that temporary overloads can be tolerated, the system operator can approximately but efficiently verify in advance that potential line overloads can be cleared during the secondary control after any MAD attacks. Based on that, we defined $\alpha D$-robustness notion and demonstrated that upper and lower bounds on the maximum $\alpha$ for which the grid is $\alpha D$-robust can be found efficiently and in polynomial time. We finally evaluated the performance of the developed algorithms and methods, and showed that they perform very well in practical test cases.

We believe that with universality and growth in the number of high-wattage IoT devices and smart thermostats, the probability of MAD attacks is increasing and there is an urgent need for more studies on the potential effects of these attacks and developing tools for grid protection. Our work provides the first methods for protecting the grid against potential line failures caused by newly discovered MAD attacks. However, our work can be extended in several directions. A straight forward direction is to extend the developed results to the AC power flow model. A more challenging research direction is to extend the methods to unit commitment phase of the grid operation. Since regular unit commitment problem is already a combinatorial problem, incorporating security constraints into that problem will be a challenging task and part of our future work.



\bibliographystyle{ACM-Reference-Format}
\bibliography{Ant_bib}
\appendix
\section{Omitted Proofs}
\begin{proof}[Proof of Lemma~\ref{lem:gen_order}]
First, notice that $1/R_i$ is the rate with which generator $i$ increases its generation to compensate for the extra demand. Hence, $t_i$ denotes the time that generator $i$ reaches its maximum capacity if the total supply does not meet the demand before $t_i$. Accordingly, generators reach their maximum capacity in order of their $t_i$ values from smallest to largest.  Using this, it is easy to see that $S_i$ is the total change in generation at time $t_i$. Therefore, if $S_i<S_{\Delta p_d}$, then generators $1$ to $i$ will reach their maximum capacities before supply meets the total demand. Moreover, since $S_{\Delta p_d}\leq S_{i+1}$, generators $i+1,\dots,n$ do not reach their capacities and each contribute according to their droop characteristic to compensate for the remaining $S_{\Delta p_d}-\sum_{l=1}^i (\overline{p_{gl}}-p_{gl})$.
\end{proof}
\begin{proof}[Proof of Lemma~\ref{lem:IMMUNE_converge}]
First, notice that for each line $(i,j)\in E$ and in each iteration of the IMMUNE Algorithm, $c_{ij}$ is not increasing. To see this, assume $c_{ij}$ changes in the $l^{th}$ iteration, and $c_{ij}^{\text{old}}$ and $c_{ij}^{\text{new}}$ denote the value of $c_{ij}$ before and after the change, respectively. Since $c_{ij}$ is changed, it means that $\overline{f_{ij}}<|f_{ij}|+\Delta f_{ij}^{\max}$. On the other hand, $|f_{ij}|\leq c_{ij}^{\text{old}}$. Hence, $\overline{f_{ij}}<c_{ij}^{\text{old}}+\Delta f_{ij}^{\max}$ or $\overline{f_{ij}}-\Delta f_{ij}^{\max}<c_{ij}^{\text{old}}.$ Since $c_{ij}^{\text{new}}=\overline{f_{ij}}-\Delta f_{ij}^{\max}$, therefore $c_{ij}^{\text{new}}<c_{ij}^{\text{old}}$.

On the other hand, from (\ref{eq:upperbound1}), it is easy to verify that after each iteration $\overline{f_{ij}}-\widehat{\Delta f_{ij}}\leq c_{ij}$. Hence, $c_{ij}$s cannot get smaller than the fixed values $\overline{f_{ij}}-\widehat{\Delta f_{ij}}$ and since (\ref{eq:ec_dis2}) is feasible, the OPF problem remains feasible after each iteration of the IMMUNE algorithm. Now since $c_{ij}$s are non-increasing and limited by lower bounds, the algorithm is guaranteed to remain feasible and converge to a local optimum solution.
\end{proof}
\begin{proof}[Proof of Lemma~\ref{lem:immune_runtime}]
In each iteration of the IMMUNE algorithm, at least for a single line $(i,j)$, the $c_{ij}$ will be updated. Otherwise the algorithm should terminate (either converges or become infeasible). On the other hand, since $\widehat{\Delta f_{ij}}$ is the maximum possible flow change on line $(i,j)$, the $c_{ij}$ cannot get smaller than $\overline{f_{ij}}-\widehat{\Delta f_{ij}}$. Hence, since the updates are discrete, in the worst case that only a single capacity is updated by a single unit at each iteration, the algorithm can take at most $\sum_{(i,j)\in E} \lceil \widehat{\Delta f_{ij}}\rceil$ iterations to terminate.
\end{proof}
\begin{proof}[Proof of Lemma~\ref{lem:extreme}]
Assume $\vec{p_d}^{(1)},\vec{p_d}^{(2)},\dots, \vec{p_d}^{(2^n)}$ denote all possible extreme demand vectors. Now assume that for each extreme demand vector $\vec{p_d}^{(i)}$, there exists a operating vector $\vec{p_g}^{(i)}$ for generators that satisfies the secondary control conditions. We prove that for all demand vectors $\vec{p_d}$ within the upper and lower limits also there exists an operating vector $\vec{p_g}$ that satisfies all the secondary controller conditions. Since the space of all the demand vectors is convex, each demand vector $\vec{p_d}$ within the upper and lower limits can be written as a convex combination of the extreme points such as $\vec{p_d} = \sum_{i=1}^{2^n} \beta_i \vec{p_d}^{(i)}$ in which $\forall i: \beta_i\geq0$ and $\sum_{i=1}^{2^n}\beta_i=1$. We show that $\vec{p_g}= \sum_{i=1}^{2^n} \beta_i \vec{p_g}^{(i)}$ satisfies all the secondary controller conditions. First, since $\vec{p_g}$ is a convex combination of $\vec{p_g}^{(i)}$s and they are within generators upper and lower limits, so is $\vec{p_g}$. Second, it is easy to see that $\vec{1}^T(\vec{p_g}-\vec{p_d})=\sum_{i=1}^{2^n} \beta_i \vec{1}^T(\vec{p_g}^{(i)}-\vec{p_d}^{(i)})=\sum_{i=1}^{2^n} \beta_i 0 = 0$. Finally, based on our assumptions, for each $i$: $-\overline{f}\leq \B(\vec{p_g}^{(i)}-\vec{p_d}^{(i)})\leq \overline{f}$. Hence, $\B(\vec{p_g}-\vec{p_d})=\sum_{i=1}^{2^n} \beta_i \B(\vec{p_g}^{(i)}-\vec{p_d}^{(i)})\leq \sum_{i=1}^{2^n} \beta_i \overline{f}=\overline{f}$. Similarly, $-\overline{f}\leq \B(\vec{p_g}-\vec{p_d})$. Therefore, $\vec{p_g}$ satisfies all the constraints of the secondary controller problem. The reverse can also be similarly proved using contradiction method.
\end{proof}
\begin{proof}[Proof of Proposition~\ref{prop:second_approx_beta}]
If there exists a vector $\vec{\beta}$ that the $\vec{\beta}$-determined controller is reliable, then for any feasible demand vector $\vec{p_d}$, vector of operating points $\vec{p_g}=(\sum_{i=1}^n p_{di})\times \vec{\beta}$ satisfies the demands (i.e., $\vec{1}^T(\vec{p_g}-\vec{p_d})=0$) and $|\vec{f}|=|\B(\vec{p_g}-\vec{p_d})|\leq \overline{f}$. Therefore, the grid is secondary controllable.
\end{proof}
\begin{proof}[Proof of Lemma~\ref{lem:maxflow_beta}]
From definition of $\vec{w_i}^{(\beta)}$ vectors, it is easy to verify that for a demand vector $\vec{p_d}$, the power flow on line $e_k$ can be computed as $f_k=\sum_{i=1}^n p_{di}\B_k \vec{w_i}^{(\beta)}$. For $|f_k|$ to be maximized, each $p_{id}$ should be either equal to $\underline{p_{di}}$ or $\overline{p_{di}}$ based on signs of  $\B_k \vec{w_i}^{(\beta)}$ and $f_k$. On the other hand, it is easy to see that $\underline{p_{di}}=\frac{(\overline{p_{di}}+\underline{p_{di}})}{2}-\frac{(\overline{p_{di}}-\underline{p_{di}})}{2}$ and $\overline{p_{di}}=\frac{(\overline{p_{di}}+\underline{p_{di}})}{2}+\frac{(\overline{p_{di}}-\underline{p_{di}})}{2}$. So by considering only $p_{di}\in\{\overline{p_{di}},\underline{p_{di}}\}$, $f_k$ can be computed as follows:
\begin{align*}
f_k &= \sum_{i=1}^n p_{di}\B_k \vec{w_i}^{(\beta)} = \sum_{i=1}^n \big(\frac{(\overline{p_{di}}+\underline{p_{di}})}{2}\pm\frac{(\overline{p_{di}}-\underline{p_{di}})}{2}\big)\B_k \vec{w_i}^{(\beta)}\\
&= \sum_{i=1}^n \frac{(\overline{p_{di}}+\underline{p_{di}})}{2}\B_k \vec{w_i}^{(\beta)}+\sum_{i=1}^n\big(\pm\frac{(\overline{p_{di}}-\underline{p_{di}})}{2}\big)\B_k \vec{w_i}^{(\beta)}.
\end{align*}
From equation above, it can be seen that the first part is fixed but the second part can be selected based on sign of the first part in order to maximize $|f_k|$. Hence, it is easy to see that maximum value of $|f_k|$ is:
\begin{align*}
\max_{\underline{p_d}\leq \vec{p_d}\leq \overline{p_d}} |f_k| = \left|\sum_{i=1}^n \frac{(\overline{p_{di}}+\underline{p_{di}})}{2} \B_k\vec{w_i}^{(\gamma)}\right| + \sum_{i=1}^n \frac{(\overline{p_{di}}-\underline{p_{di}})}{2}|\B_k\vec{w_i}^{(\beta)}|.
\end{align*}
\end{proof}
\begin{proof}[Proof of Proposition~\ref{prop:LPBeta}]
In order to solve (\ref{eq:optimalBeta}) using LP, one can define auxiliary vector $\vec{u}$ and matrix $\Q$ and replace the constraint $\vec{f} = |\B\W^{(\beta)}(\overline{p_d}+\underline{p_d})/2| + |\B\W^{(\beta)}|(\overline{p_d}-\underline{p_d})/2$ in (\ref{eq:optimalBeta}) with following set of inequalities:
\begin{align*}
& \vec{f} = \vec{u} + \Q(\overline{p_d}-\underline{p_d})/2,\\
& \vec{u}\geq \B\W^{(\beta)}(\overline{p_d}+\underline{p_d})/2,\\
& \vec{u}\geq -\B\W^{(\beta)}(\overline{p_d}+\underline{p_d})/2,\\
& \Q \geq \B\W^{(\beta)},~~\Q \geq -\B\W^{(\beta)},
\end{align*}
in which the matrix inequalities are entry by entry. Now it is easy to verify that since the optimization minimize $\eta$ and $\vec{f}\leq \eta \overline{f}$, in the optimal solution $\vec{f}$ will be minimized and therefore $\vec{u}$ and $\Q$ will be equal to $|\B\W^{(\beta)}(\overline{p_d}+\underline{p_d})/2|$ and $|\B\W^{(\beta)}|$, respectively. Hence using above transformation, (\ref{eq:optimalBeta}) can be solved using LP. It can be seen that if the optimal solution $\eta^*$ to (\ref{eq:optimalBeta}) is less than or equal to 1, then since $\vec{f}$ is equal to the maximum power flow on the lines over all possible demand vectors (and corresponding generation operating points obtained by the $\vec{\beta}^*$-determined controller) and $\vec{f}\leq \eta^*\overline{f}\leq \overline{f}$, the $\vec{\beta}^*$-controller is reliable. Hence, the grid is secondary controllable.
\end{proof}
\begin{proof}[Proof of Proposition~\ref{prop:alpha_hat}]
Since in optimization (\ref{eq:max_alpha}) only the maximum demand case (i.e., $\vec{p_d}=(1+\alpha)\vec{p_d}^\dagger$) is being verified to be satisfiable by the generators with no line overloads, the optimal solution of (\ref{eq:max_alpha}) only provides an upper bound for $\alpha^{\max}$.
\end{proof}
\begin{proof}[Proof of Proposition~\ref{prop:alpha_star}]
Using (\ref{eq:maxflow_beta}), it can be verified that the maximum power flow on a line $(i,j)$ over all the demand vectors and corresponding generation vector determined by the $\vec{\beta}$-determined controller is equal to $|\B\W^{(\beta)}\vec{p_d}^\dagger| + |\B\W^{(\beta)}|(\alpha\vec{p_d}^\dagger)$. Hence, optimization (\ref{eq:max_alpha_SC_controller}) maximizes $\alpha$ such that the grid is $\alpha D$-robust using the specified $\vec{\beta}$-determined controller. On the other hand, since the operating points of the generators are limited to the operating points obtained by the specified $\vec{\beta}$-determined controller, it is obvious that demand vectors that are controllable by this controller are subset of all controllable vectors. Hence, $\alpha^*$ only provides a lower bound for $\alpha^{\max}$. Finally, it is also easy to see that similar to the technique presented in the proof of Proposition~\ref{prop:LPBeta}, optimization (\ref{eq:max_alpha_SC_controller}) can be solved using LP and therefore $\alpha^*$ can be computed in polynomial time.
\end{proof}
\begin{proof}[Proof of Proposition~\ref{prop:beta_lowerbound}]
At each iteration, if $\alpha^{(i)}>\alpha^{\max}$, then the solution $\eta$ to (\ref{eq:optimalBeta}) would be greater than 1. Hence, if $\lambda$ is small enough, $0\leq \alpha^{(i+1)}= \alpha^{(i)}+\lambda (1-\eta)\leq \alpha^{(i)}$. Similarly, it can be shown that if $\alpha^{(i)}<\alpha^{\max}$,  then $\alpha^{(i+1)}>\alpha^{(i)}$. On the other hand, for $\alpha^{(i)}=\alpha^{\max}$, the solution $\eta$ to (\ref{eq:optimalBeta}) would be zero and $\alpha^{(i)}=\alpha^{(i+1)}=\alpha^{\max}$. Hence, $\alpha^{\max}$ is the only absorbing point for this algorithm which it converges to (if $\lambda$ is small enough).
\end{proof}
\begin{proof}[Proof of Proposition~\ref{prop:gambeta_lowerbound}]
The convergence proof is similar to the proof of Proposition~\ref{prop:beta_lowerbound}. It is also easy to see that since $\vec{\beta}$-determined controllers are  special case of $(\vec{\gamma},\vec{\beta})$-determined controllers, $\alpha^{(\beta)}\leq \alpha^{(\gamma,\beta)}$.
\end{proof}
\end{document}